\newcommand\beq{\begin{equation}}
\newcommand\eeq{\end{equation}}
\newcommand\beqa{\begin{eqnarray}}
\newcommand\eeqa{\end{eqnarray}}
\newcommand{\nn}{\nonumber}
\newcommand{\la}{\langle}
\newcommand{\ra}{\rangle}
\newcommand{\id}{\mathrm{Id}}
\newcommand{\VV}{\mathcal{V}}
\newcommand{\II}{\mathcal{I}}
\newcommand{\TT}{\mathcal{T}}
\newcommand{\UU}{\mathcal{U}}
\newcommand{\HH}{\mathcal{H}}
\newcommand{\s}{\mathcal{S}_1}
\newcommand{\BB}{\mathcal{B}}
\newcommand{\ZZ}{\mathcal{Z}}
\newcommand{\PP}{\mathcal{P}}
\newcommand{\Tr}{\mathrm{Tr}}
\newcommand{\rmT}{\mathrm{T}}
\newcommand{\Ck}{\mathbb{C}^k}
\newcommand{\UPQP}{\mathrm{UPQP}_d}
\newcommand{\epsUPQP}{\varepsilon - \mathrm{UPQP}_d}
\newcolumntype{P}[1]{>{\centering\arraybackslash}p{#1}}
\newcolumntype{?}{!{\vrule width .5pt}}
\newtheorem{observation}{Observation}
\newtheorem{fact}{Fact}
\newtheorem{definition}{Definition}
\newtheorem{proposition}{Proposition}
\newtheorem{theorem}{Theorem}
\newtheorem{corollary}{Corollary}
\theoremstyle{definition}
\begin{document}

\title{Resource Quantification for the No-Programming Theorem}
\author{Aleksander M. Kubicki}
\email{amkubickif@gmail.com}
\affiliation{Departamento de Análisis Matemático, Universitat de València, 46100 Burjassot, Spain}
\author{Carlos Palazuelos}
\author{David Pérez-García}
\affiliation{Departamento de Análisis Matemático y Matemática Aplicada, Universidad Complutense de Madrid, 28040 Madrid, Spain}
\affiliation{Instituto de Ciencias Matemáticas, 28049 Madrid, Spain}

\date{\today}
\begin{abstract}
The no-programming theorem prohibits the existence of a Universal Programmable Quantum Processor. This statement has several implications in relation to quantum computation, but also to other tasks of quantum information processing, making this construction a central notion in this context. Nonetheless, it is  well known that even when the strict model is not implementable, it is possible to conceive of it in an approximate sense. Unfortunately,   the minimal resources necessary  for this aim are still not completely understood. Here, we investigate quantitative statements of the theorem, improving exponentially previous bounds on the resources required by such a hypothetical machine. The proofs exploit a new connection between quantum channels and embeddings between Banach spaces which allows us to use classical tools from geometric Banach space theory  in a clean and simple way.
\end{abstract}

\maketitle

Since the early days of Quantum Information Theory,  no-go theorems have served as  guideline in the search of a deeper understanding of quantum theory as well as for the development of applications of quantum mechanics to cryptography and computation. They shed light on those aspects of quantum information which make it so different from its classical counterpart. Some renowned examples are the no-cloning \cite{Herbert82,*Dieks82,*Milonni82,*WootersZurek82,*Mandel83}, no-deleting \cite{PatiBraunstein00} and no-programming \cite{NielsenChuang97} theorems. 

The no-programming theorem concerns with the so-called Universal Programmable Quantum Proccesor, UPQP \footnote{  Originally called Programmable Quantum Gate Array \cite{NielsenChuang97}.}. A UPQP is a universal machine  able to perform any quantum operation on an arbitrary input state of fixed size, programming the desired action in a quantum register inside the machine (a quantum memory). It can be understood as the quantum version of a stored-program computer. For the sake of simplicity, we will consider programmability of unitary operations, although this is not really a restrictive assumption \footnote{Even in this case we could program general quantum channels implementing a unitary first and tracing out a part of the output.}. 
With this figure of merit, the no-programming theorem is stated as the non-existence of a UPQP using finite dimensional resources. The key observation made in \cite{NielsenChuang97} is that in order to program two different unitaries we need two orthogonal program states. Then, the infinite cardinality of the set of unitary operators, even in the case of a qubit, leads immediately to the requirement of an infinite dimensional memory. Similar consequences follow for the related  concept of Universal Programmable Quantum Measurements \cite{DusekBuzek02,FiurasekDusekFilip02,DarianoPerinnoti05},  which are machines with the capability to be programmed to implement arbitrary quantum measurements.

From a conceptual point of view, the no-programming theorem points out  severe limitations in how universal quantum computation can be conceived. However, these limitations can be surpassed  by relaxing the requirements on the model of UPQP. In particular, one can consider programmable devices working noisily or probabilistically. Indeed, in the last two decades, several  proposals of such approximate UPQPs  have appeared in the literature \cite{NielsenChuang97,KimEtal01,*HilleryBuzekZiman01,*BrazierBuzek05,*BeraEtal09,*VidalCirac02,IshizakaHiroshima08}. Thus, it is interesting to look for more quantitative statements about \emph{quantum programmability}. To put it in explicit words, we worry here about the relation between the memory size of an approximate UPQP, $m$, and both, the accuracy of the scheme, $\varepsilon$,  and the size of the input register in which we want to implement the program, $d$. Despite their relevance, these relations are still poorly understood. Existing results are summarized  in table \ref{Table}.

\begin{table*}
	\begin{tabular}{c P{5cm} P{1cm} P{5cm} P{1cm}}	
		\rule{0pt}{2.5ex}				&		\multicolumn{2}{c}{   Previous results}   &	   \multicolumn{2}{c}{This work}
		\\ 
		\cmidrule[.5pt](l{2em}r{2em}){2-3} \cmidrule[.5pt](r{2em}l{2em}){4-5}	 
 \specialrule{.0em}{.15em}{.35em} 
		\multicolumn{1}{c?}{$\begin{array}{cc}\text{Lower}\\\text{bounds}\end{array}$}		&  	$ \begin{array}{cc} m \ge \mathsf{K}  (\frac{1}{d})^{\frac{d+1}{2}} \left(\frac{1}{\varepsilon} \right)^{\frac{d-1}{2}}   \\[.5em]   m \ge \mathsf{K} \left( \frac{d}{\varepsilon} \right)^2 \vspace{.1cm}  \end{array}$ & \multicolumn{1}{c?}{	$ \begin{array}{cc} \text{\cite{Perez06}}   \\[.5em] \text{\cite{Majenz17}} \end{array}$}	&	  $   m   \ge  2^{\frac{(1-\varepsilon) }{\mathsf{K}}d	-	\frac{2}{3}\log d}  $&[Th.\ref{th1}]
		\\ \specialrule{.5pt}{.0em}{.0em} 
		\multicolumn{1}{c?}{\rule[.2em]{0pt}{4ex}$\begin{array}{cc}\text{Upper}\\\text{bounds} \end{array}$}		&  	$ m  \le 2^{\frac{4 d^2 \log d}{\varepsilon^2} } $  &	  \multicolumn{1}{c?}{\cite{IshizakaHiroshima08,BeigiKonig11,Majenz18}} &	  $m  \le 
		\left(\frac{\mathsf{K}}{\varepsilon}\right)^{d^2} $ & [Eq.\eqref{Upperbound}]	\\
	\end{tabular}
	\caption{ \label{Table} Best known bounds for the optimal memory size of  UPQPs in comparison with the results presented here. Above, $\mathsf{K}$ denotes universal constants, not necessarily equal between them. Let us point out that the bound from \cite{Perez06} was deduced for programmable measurements instead of UPQPs. However, since a UPQP always can be turned into a Universal Programmable Quantum Measurement, this lower bound also applies for the case studied here. Notice that the alluded bound, although it enforces a strong scaling of $m$ with $\varepsilon$, becomes trivial for large input dimension $d$. It is in this regime where the bound from \cite{Majenz17} is more informative, but still exponentially weaker than the bound provided by Theorem \ref{th1}. }
\end{table*}

In this Letter we provide new upper and lower bounds which substantially clarify the ultimate resources required by approximate UPQPs.  Indeed, the results in this work entail exponential improvements over previously known results. Our bounds show the optimal dependence of $m$ with $\varepsilon$ and $d$ separately. In fact, the lower bound of Theorem \ref{th1} is nearly saturated for fixed $\varepsilon$ by the performance of Port Based Teleportation, which was originally conceived as a UPQP \cite{IshizakaHiroshima08}. On the other hand, we deduce an upper bound, \eqref{Upperbound}, saturating almost optimally the scaling with $\varepsilon$ of the bound from \cite{Perez06}.

The proofs presented in this manuscript are based on a connection with geometric functional analysis that we uncover. The use of techniques from this branch of functional analysis, in particular, from Banach space theory and operator spaces - as it is the case in this work - have proven to be very  fruitful in the study of different topics of quantum information theory such as entanglement theory, quantum non-locality and quantum channel theory (see \cite{AubrunSzarekBook,PalazuelosVidickReview} and references therein). We find the path to put forward this  mathematical technology to the framework studied here. More precisely, we \emph{characterize}  UPQPs as isometric embeddings between concrete Banach spaces which are in addition complete contractions (considering some operator space structure). Once this characterization is established, the results about UPQPs are deduced by using classical tools from local Banach space theory in a simple and clean way. We think that the general ideas presented  here and potential generalizations of them can provide further insights  in other contexts related with quantum computation and cryptography.

In this main text, we limit ourselves to explain the results stressing the ideas behind them \cite{SuppMat}.

\emph{Preliminaries}.-- Given a finite dimensional Hilbert space $\HH$, we denote by $\BB(\HH)$ or $\s(\HH)$ the space of (bounded) operators on $\HH$ with the operator or the trace norm, respectively. We will also denote by $\UU(\HH)$ and $\mathcal{D}(\HH)$  the subset of unitary and density operators. The set of quantum channels (that is, completely positive and trace
preserving maps $\BB(\HH)\rightarrow \BB(\HH) $) will be denoted by $\mathrm{CPTP} (\mathcal{H})$. We will usually consider a $d$-dimensional complex Hilbert space $\HH_d$ as the input state space and an ancillary $m$-dimensional complex Hilbert space $\HH_m$, as the memory of the programmable device under consideration.   When logarithms are used, they are generically considered in base 2.

\begin{definition}\label{PQP}
	 A quantum operation $\mathcal{P} \in \mathrm{CPTP}(\mathcal{H}_d \otimes \mathcal{H}_M)$ is a $d$-dimensional Universal Programmable Quantum Processor, $\UPQP$, if  for every $U \in \UU( \mathcal{H}_d)$ there exists a unit vector $ |\phi_U\rangle \in \HH_m$ such that:
	\begin{equation}\nn
	\mathrm{Tr}_{\mathcal{H}_m}  \left[ \mathcal{P} \left( \rho \otimes  |\phi_U\rangle \langle\phi_U|  \right) \right] = U \rho U^{\dagger},\quad  \text{for every \text{  }$\rho\in\mathcal{D}(\mathcal{H}_d $)}.
	\end{equation}
\end{definition}

Essentially, this is the concept of Universal Quantum Gate Array introduced in \cite{NielsenChuang97}, and whose impossibility is the content of the no-programming theorem discovered also there. As we said in the previous section, the no-programming theorem does not apply if one considers a relaxation of the previous definition; that is, in the case of approximate UPQPs. Two notions of approximate UPQPs have been considered in the literature: probabilistic settings \cite{NielsenChuang97,HilleryBuzekZiman02}, which implement exactly the desired unitary with some probability of failure, obtaining information about the success or failure of the procedure; and deterministic UPQPs \cite{VidalCirac00}, which always implement an operation which is close to the desired one. Notice that both notions are related, since probabilistic UPQPs can be also understood as deterministic ones just ignoring the information about the success or failure of the computation. A natural way to express these notions of approximation is  through the distance induced by the diamond norm \cite{Kitaev97}:

\begin{definition}\label{epsPQP}
	We say that $\mathcal{P} \in \mathrm{CPTP}(\mathcal{H}_d \otimes \mathcal{H}_m)$ is a $d$-dimensional ${\varepsilon}$--Universal Programmable Quantum processor, $\epsUPQP$, if  for every  $ U \in \UU( \mathcal{H}_d)$ there exists a unit vector $ |\phi_U\rangle\in\HH_m$ such that:
	\begin{equation}\nn
	\frac{1}{2}\left\|\mathrm{Tr}_{\mathcal{H}_m}  \left[ \mathcal{P} \left( \ \cdot \  \otimes  |\phi_U\rangle \langle\phi_U|  \right) \right] \   - \ U (\cdot) U^{\dagger}\right\|_{\diamond} \le \varepsilon,
	\end{equation}
	where $\|\: \cdot\: \|_\diamond$ denotes  the \emph{diamond norm}.
\end{definition}

Two  relevant examples of approximate UPQPs are,\begin{itemize}
	\item  the one built based on standard quantum Teleportation \cite{NielsenChuang97}, which can be understood as a probabilistic $\epsUPQP$  with $\varepsilon =1- \frac{1}{d^2}$ and memory dimension $m=d^2$;
	\item  the protocol of Port Based Teleportation itself \cite{IshizakaHiroshima08}, which can be arranged as a probabilistic or deterministic $\epsUPQP$ with memory dimension, $m$, scaling as $\sim 2^{\frac{4 d^2 \log d}{\varepsilon^2} }$ \cite{BeigiKonig11,Majenz18}.
\end{itemize}

Notice that in the first case, the resources used are remarkably efficient. The counterpart is that the success probability (accuracy of the setting) is rather low. In contrast, in the second example the accuracy can be arbitrarily improved at the  prize of increasing the dimension of the resource state. These examples show the rich landscape of behaviours displayed by UPQPs, which turns  the understanding of these objects challenging. The results presented here shed new light on them.

\emph{The connection}.-- In this section we explain, omitting proofs, the key connection between $\epsUPQP$ and isometric embeddings between Banach spaces at the heart of the proofs of our main results.

The crucial ingredient is the characterization of  $\UPQP$ as  isometric embeddings $\Phi: \s(\HH_d) \hookrightarrow \BB(\HH_m)$  with completely bounded norm $\|\Phi\|_{cb}\le1$, i.e., complete contractions. For $\epsUPQP$ the characterization holds distorting the isometric property of the embedding with some disturbance $\delta(\varepsilon)$. Turning to the completely bounded norm of $\Phi$, it can be understood in this particular case as follows. Let us put each $\VV\in\BB(\HH_d\otimes\HH_m)$ in one-to-one correspondence with the linear map
 \begin{equation}\label{correspondence}
\begin{array}{cccl}
	\Phi_\VV: & \s(\HH_d) &\lhook\joinrel\longrightarrow& \ \BB(\HH_m)\\
	&  \sigma &  \mapsto  &  \Phi_\VV(\sigma) := \Tr_{\HH_d} \VV (\sigma^\rmT \otimes \id_M)  .\end{array}
\end{equation} 
Given this correspondence, the completely bounded norm of $\Phi_{\VV}$ can be simply regarded as  $ \|\Phi_\VV\|_{cb} = \|\VV\|_{\BB(\HH_d\otimes \HH_m)}$.

More generally, within the theory of operator spaces, $\s(\HH_d)$ and $\BB(\HH_m)$ are endowed naturally with a sequence of norms when tensorized with the set of $k\times k$ complex matrices $M_k$. Then, $\|\Phi\|_{cb}$  is  nothing but  $\sup_{k} \| {\rm Id}_{M_k}\otimes \Phi\|$. The equivalence between this more profound definition of the completely bounded norm and the one given before is provided by a well known result in operator space theory \cite[Proposition 8.1.2]{RuanBook}. We emplace the interested reader to \cite{SuppMat}.

We are now in position to state formally the results of this section:
\begin{theorem}\label{prop1}
	Every unitary $\epsUPQP$, given by  $\mathcal {P}(\cdot)=\mathcal{V} (\,\cdot\,) \mathcal{V}^\dagger\in  \mathrm{CPTP} (\mathcal{H}_d \otimes \HH_m)$, defines a completely contractive map $\Phi_{\mathcal{V}}: S_1(\HH_d) \longrightarrow \BB(\HH_m)$ such that
	$$
	\nn\|\sigma\|_{S_1(\HH_d)} \ge \|\Phi_{\mathcal{V}}(\sigma)\|_{\BB(\HH_m)}  \ge (1-\varepsilon)^{1/2} \|\sigma\|_{S_1(\HH_d)}$$ 
	for every $\sigma\in S_1(\HH_d).$ Such a map  is called a completely contractive  $\varepsilon$-embedding.
\end{theorem}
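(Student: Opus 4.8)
The statement really comprises two inequalities together with complete contractivity, and my plan is to dispatch the upper estimate and the cb-bound immediately, reserving all the work for the lower bound, which is where the $\varepsilon$-accuracy of the processor must enter. For the upper part I would invoke the correspondence \eqref{correspondence} directly: since $\mathcal{P}(\cdot)=\mathcal{V}(\cdot)\mathcal{V}^\dagger$ is trace preserving, $\mathcal{V}$ is an isometry on $\HH_d\otimes\HH_m$, so $\|\mathcal{V}\|_{\BB(\HH_d\otimes\HH_m)}=1$. The identity $\|\Phi_{\mathcal{V}}\|_{cb}=\|\mathcal{V}\|_{\BB(\HH_d\otimes\HH_m)}$ quoted after \eqref{correspondence} then gives $\|\Phi_{\mathcal{V}}\|_{cb}=1$, and since the completely bounded norm dominates the ordinary one, $\|\Phi_{\mathcal{V}}(\sigma)\|_{\BB(\HH_m)}\le\|\sigma\|_{\s(\HH_d)}$ for every $\sigma$. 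This settles both the upper bound and the qualifier \emph{completely contractive}.

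For the lower bound the idea is to feed the processor a carefully chosen input and to read $\Phi_{\mathcal{V}}(\sigma)$ off the resulting output fidelity. Introducing the Stinespring isometry $V_\phi=\mathcal{V}(\id\otimes|\phi_U\rangle):\HH_d\to\HH_d\otimes\HH_m$ of the approximate channel $\mathcal{E}_U(\cdot)=\Tr_{\HH_m}[\mathcal{V}((\cdot)\otimes|\phi_U\rangle\langle\phi_U|)\mathcal{V}^\dagger]$, a one-line computation in the block decomposition $\mathcal{V}_{ij}=(\langle i|\otimes\id)\mathcal{V}(|j\rangle\otimes\id)$ gives $\Phi_{\mathcal{V}}(\sigma)=\sum_{ij}\sigma_{ij}\mathcal{V}_{ij}$. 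I would then fix a normalised $\sigma$, take a reference system $\HH_R\cong\HH_d$ and an input vector $|\eta\rangle=\sum_k|k\rangle\otimes|r_k\rangle\in\HH_d\otimes\HH_R$, and compute the overlap of the true output $(\mathcal{E}_U\otimes\id)(|\eta\rangle\langle\eta|)$ with the ideal output $|\eta_U\rangle=(U\otimes\id)|\eta\rangle$. The payoff of this calculation is the identity $\langle\eta_U|(\mathcal{E}_U\otimes\id)(|\eta\rangle\langle\eta|)|\eta_U\rangle=\|\Phi_{\mathcal{V}}(\sigma)|\phi_U\rangle\|^2$, valid precisely when the Gram matrix $G_{lk}=\langle r_l|r_k\rangle$ and the unitary $U$ are tied to $\sigma$ through $\sigma=\overline{U}\,G$.

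It then remains to realise an arbitrary $\sigma$ in this form while keeping the fidelity bound sharp. I would use the polar decomposition $\overline{\sigma}=U\,H$ with $H=|\overline{\sigma}|\ge0$ and $\Tr H=\|\sigma\|_{\s(\HH_d)}=1$; setting $G=\overline{H}$ (a density matrix) and choosing $|r_k\rangle=G^{1/2}|k\rangle$ produces a genuine unit input vector $|\eta\rangle$ realising $\sigma=\overline{U}\,G$. Because the ideal output $|\eta_U\rangle$ is then exactly pure, the hypothesis $\tfrac12\|\mathcal{E}_U-U(\cdot)U^\dagger\|_\diamond\le\varepsilon$ yields the sharp estimate $\langle\eta_U|(\mathcal{E}_U\otimes\id)(|\eta\rangle\langle\eta|)|\eta_U\rangle\ge1-\varepsilon$, whence $\|\Phi_{\mathcal{V}}(\sigma)\|_{\BB(\HH_m)}\ge\|\Phi_{\mathcal{V}}(\sigma)|\phi_U\rangle\|\ge(1-\varepsilon)^{1/2}=(1-\varepsilon)^{1/2}\|\sigma\|_{\s(\HH_d)}$, and homogeneity removes the normalisation.

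The main obstacle I anticipate is exactly this last reduction. The naive choice of a maximally entangled input only controls $\sigma$ of the form $\overline{U}$, i.e.\ (scalar multiples of) unitaries; for a general $\sigma$ it forces a comparison of the output with a non-optimal test vector, and the tight $1-\varepsilon$ degrades into a useless additive error of order $\varepsilon d^2$. Routing an arbitrary $\sigma$ through the polar decomposition, so that the test vector coincides with the genuine ideal output $|\eta_U\rangle$, is what preserves sharpness, and checking that this input indeed reproduces $\Phi_{\mathcal{V}}(\sigma)$ with the correct normalisation is the single computation I would carry out in full detail.
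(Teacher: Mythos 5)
Your proposal is correct and follows essentially the same route as the paper's own proof: the upper bound and complete contractivity come from the identification $\|\Phi_{\VV}\|_{cb}=\|\VV\|_{\BB(\HH_d\otimes\HH_m)}=1$, and your polar-decomposition/Gram-matrix input $|\eta\ra$ is exactly the paper's SVD-based purification $\sigma^\rmT=\Tr_\ZZ|\psi\ra\la\gamma|$ with $|\gamma\ra=(U\otimes\id_\ZZ)|\psi\ra$, both reducing $\|\Phi_{\VV}(\sigma)|\phi_U\ra\|^2$ to the overlap of the processor's output with the pure ideal output. The only cosmetic difference is how that overlap is bounded below by $1-\varepsilon$: the paper routes it through the operational interpretation of the diamond norm ($p^*_{dist}$), while you use the equivalent projector test $\la\eta_U|\rho|\eta_U\ra\ge 1-\tfrac12\|\rho-|\eta_U\ra\la\eta_U|\|_1$.
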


We also found true a converse of this statement:
\begin{theorem}\label{prop2}
	Every completely contractive map $\Phi: \  \s(\HH_d) \longrightarrow \BB(\HH_m) $ such that
	$$\nn\|\sigma\|_{\s(\HH_d)} \ge \|\Phi(\sigma)\|_{\BB(\HH_m)}  \ge (1-\delta)\|\sigma\|_{\s(\HH_d)}$$
	 for every \,$\sigma\in S_1(\HH_d)$,	defines a $\epsUPQP$ with $\varepsilon=\sqrt{2\delta}  $ and memory dimension at most $d m^3 =\dim(\HH_d \otimes \HH_m^{\otimes 3})$.
\end{theorem}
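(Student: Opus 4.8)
\emph{Proof strategy.} The plan is to read a contraction off $\Phi$ via the correspondence \eqref{correspondence}, use the lower embedding bound to manufacture one program state per unitary, and then turn the (non-physical) contraction into a genuine channel by a dilation, paying the dimensional overhead in the memory. Concretely, every linear map $\s(\HH_d)\to\BB(\HH_m)$ equals $\Phi_\VV$ for a unique $\VV\in\BB(\HH_d\otimes\HH_m)$ with $\|\Phi_\VV\|_{cb}=\|\VV\|_{\BB(\HH_d\otimes\HH_m)}$, so complete contractivity of $\Phi$ hands me a contraction $\VV$, $\|\VV\|\le1$, with $\Phi=\Phi_\VV$. This $\VV$ is the raw material for the processor; the only feature it lacks is unitarity, which I will restore by dilation at the end.

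Second, I would extract the program states. Fixing $U\in\UU(\HH_d)$ and applying the lower bound to $\sigma_U:=\tfrac1d\bar U\in\s(\HH_d)$, for which $\|\sigma_U\|_{\s(\HH_d)}=1$, I use that for $|\chi\ra,|\phi\ra\in\HH_m$ one has $\la\chi|\Phi_\VV(\sigma_U)|\phi\ra=\tfrac1d\Tr\!\big(K_{\chi\phi}\,U^\dagger\big)$, where $K_{\chi\phi}:=(\id\otimes\la\chi|)\,\VV\,(\id\otimes|\phi\ra)\in\BB(\HH_d)$ is the compression of $\VV$ in the memory slot. Hence $\|\Phi_\VV(\sigma_U)\|_{\BB(\HH_m)}\ge1-\delta$ produces unit vectors $|\phi_U\ra,|\chi_U\ra$ with
\[
\tfrac1d\big|\Tr\!\big(K_U\,U^\dagger\big)\big|\ge1-\delta,\qquad K_U:=K_{\chi_U\phi_U},\ \ \|K_U\|\le1 .
\]
After absorbing a phase into $|\chi_U\ra$, this says $K_U$ has normalized Hilbert--Schmidt overlap at least $1-\delta$ with $U$, i.e.\ the memory output $\VV(\id\otimes|\phi_U\ra)$ is close to $U\otimes|\chi_U\ra$. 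I take $|\phi_U\ra$ as the program and $|\chi_U\ra$ as the distinguished output direction.

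Third, I would assemble the processor and estimate the error. Feeding $|\phi_U\ra$ and tracing the memory implements $\rho\mapsto\Tr_{\HH_m}\VV(\rho\otimes|\phi_U\ra\la\phi_U|)\VV^\dagger$; since $\VV$ is merely a contraction I replace it by a unitary dilation $\tl\VV$ on $\HH_d\otimes\HH_m$ plus an ancilla, initialised in a fixed state that is absorbed into the program register. Tracing out the enlarged memory yields a bona fide $\mathcal P$ whose Stinespring isometry carries $K_U$ as its component along $|\chi_U\ra$; the overlap then gives entanglement fidelity $\ge(1-\delta)^2\ge1-2\delta$ with $U(\cdot)U^\dagger$, and Fuchs--van de Graaf converts this into $\tfrac12\|\cdot\|_\diamond\le\sqrt{1-F}\le\sqrt{2\delta}$, i.e.\ $\varepsilon=\sqrt{2\delta}$. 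For the dimension, the unitary dilation of a contraction on $\HH_d\otimes\HH_m$ costs an ancilla of size at most $dm$, and a further copy of $\HH_m$ is spent on the output direction $|\chi_U\ra$ needed for the worst-case estimate below, so the whole memory embeds into $\HH_d\otimes\HH_m^{\otimes3}$, of dimension $dm^3$.

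The hard part will be the third step, and specifically promoting the \emph{average} information the embedding supplies to a \emph{worst-case} diamond bound. The overlap $\tfrac1d\Tr(K_U U^\dagger)\ge1-\delta$ only controls $K_U$ in Hilbert--Schmidt norm, and tracing the memory of a generic dilation controls only the entanglement fidelity; a careless estimate leaks a spurious dimensional factor and yields $\sqrt{d\delta}$ rather than $\sqrt{2\delta}$. Removing this factor is exactly where the full force of \emph{complete} contractivity enters: one must feed the embedding inequality its matrix amplifications, or equivalently symmetrise the construction so that the implemented channel is covariant and its fidelity input-independent---the same mechanism that makes Port Based Teleportation universal. It is this stabilisation, rather than the bare dilation, that consumes the extra copies of $\HH_m$ and pins the memory at $dm^3$.
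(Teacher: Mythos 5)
Your setup (the correspondence $\Phi=\Phi_T$ with $\|T\|_{\BB(\HH_d\otimes\HH_m)}=\|\Phi\|_{cb}\le 1$, a dilation to a genuine channel, Fuchs--van de Graaf at the end) matches the paper's skeleton, and you correctly located the crux: converting the embedding hypothesis into a \emph{worst-case} diamond-norm bound with no dimension factor. But your proposal does not solve that crux, and the fix you gesture at is not the one that works. Extracting the program state from the single test input $\sigma_U=\tfrac1d\bar U$ only yields $\tfrac1d|\Tr(K_U U^\dagger)|\ge 1-\delta$, i.e.\ control of $K_U$ in normalized Hilbert--Schmidt norm, equivalently entanglement fidelity $\ge(1-\delta)^2$. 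This genuinely cannot suffice: the deviation of $K_U$ from $U$ can be concentrated on a low-dimensional subspace, invisible to the maximally mixed/entangled input but fatal for the worst-case input, so this route caps out at $\varepsilon\sim\sqrt{d\delta}$, exactly as you concede. Your proposed repairs --- amplifying the lower bound with matrices, or symmetrising ``as in Port Based Teleportation'' --- are not carried out, and neither is what the paper does; indeed, in the paper complete contractivity is used only to write $\Phi(\cdot)=\Tr_{\HH_d}[T(\cdot^{\rmT}\otimes\id)]$ with $\|T\|\le 1$ and then to invoke Russo--Dye ($T=\sum_i\lambda_i\TT_i$, a convex combination of at most $dm$ unitaries, realised as a controlled unitary with coherent program $\sum_i\sqrt{\lambda_i}|i\ra$). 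The worst-case uniformity comes from elsewhere.

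The missing idea is a duality/Hahn--Banach argument that uses the embedding hypothesis on \emph{all} of $\s(\HH_d)$, not at one point per unitary. Since $\|\Phi^{-1}\|\le 1/(1-\delta)$ on the image $\Phi(\s(\HH_d))$, the norm-one functional $u(\sigma)=\Tr[U\sigma]$ pulls back to $a_U=u\circ\Phi^{-1}$ on the image with $\|a_U\|\le 1/(1-\delta)$, extends by Hahn--Banach to $\bar a_U$ on all of $\BB(\HH_m)$ with the same norm, and is identified with a single element $A_U\in\s(\HH_m)$ satisfying
\begin{equation*}
\Tr\big[A_U\,\Phi(\rho U^\dagger)\big]\;\ge\;(1-\delta)\,\|\rho\|_{\s(\HH_d)}\qquad\text{for \emph{every} positive }\rho\in\s(\HH_d)
\end{equation*}
simultaneously. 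This one object per $U$ --- not any covariance or amplification --- is what makes the program state input-independent and kills the dimension factor; purifying $A_U$ (it is neither pure nor positive) forces the passage to $\bar\Phi(\cdot)=\Phi(\cdot)\otimes\id_{\HH_m}$ and a program vector $|\xi_U\ra\in\HH_m^{\otimes 2}$, which together with the $dm$-dimensional Russo--Dye index register accounts precisely for the $d m^3$ memory. Without this step your construction produces a channel whose fidelity with $U(\cdot)U^\dagger$ you can certify only on the maximally entangled input, and the claimed $\varepsilon=\sqrt{2\delta}$ does not follow.
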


This establishes a \emph{characterization}, rather than a simple relation between the objects considered.

Even when the proofs of these theorems were left outside the main text, let us finish this section noting that the starting point to establish this characterization is precisely the correspondence  considered above, \eqref{correspondence}.


\emph{Results about UPQPs}.-- The characterization given in the preceding section leads  to a better understanding of UPQPs, which is summarized in the last column of Table \ref{Table}. This results, presented in the remaining of the Letter,  reduce drastically the existing gaps between previous lower an upper bounds in the study of UPQPs. We now sketch the proof of them.

Let us begin with the upper bound,
 \begin{equation} \label{Upperbound}
m \le \bigg( \frac{\tilde C}{\varepsilon}  \bigg)^{d^2},
\end{equation}
 where $\tilde C$  is a constant. Although this bound follows easily from a $\varepsilon$-net argument,  we find instructive to follow the lines of the proof of Theorem \ref{prop2} in this simplified case.

Firstly, we think at the level of embeddings between Banach spaces and consider the following mapping:
\begin{equation}\label{EmbHB}
\begin{array}{clll}
	\Phi:	&	\s(\HH_d)  &\quad  \longrightarrow  \quad &  \ell_\infty\big( \mathsf{ball}(\BB(\HH_d)) \big)  \\
	&	\hspace{3mm}	\sigma & \quad \ 	\mapsto  \quad	&  \hspace{2mm}	\big( \Tr[A \sigma^\rmT] \big)_{A\in\mathsf{ball}(\BB(\HH_d))}		,
\end{array}
\end{equation}
where  $\mathsf{ball}(X)$ denotes the unit ball of a Banach space $X$ and, for a given set $\mathcal{X}$, $\ell_\infty(\mathcal X)$ denotes the space of bounded functions from $\mathcal X$ to $\mathbb C$ endowed with the supremum norm. Then, it is straightforward to see that this embedding is isometric. Indeed, noting that $\BB(\HH_d)$ is the Banach dual of $\s(\HH_d)$, the embedding considered is usually recognized as a standard consequence of the Hahn-Banach theorem \cite{Rudin}.

In addition, the fact that $\ell_\infty(\mathcal X)$ can be understood as a commutative $C^*$-algebra guarantees that the bounded and completely bounded norms of any map $\Phi:E\rightarrow\ell_\infty (\mathcal X)$ coincide \cite[ Proposition 2.2.6]{RuanBook}. This also allows us to drop out the awkward transposition in \eqref{EmbHB}.

In order to obtain a finite dimensional version of the   embedding \eqref{EmbHB}, we discretize the image  by means of a $\varepsilon$--net on  $\UU(\HH_d)$. That is, we consider a finite sequence 
$	
\lbrace U_i \rbrace_{i=1}^{|\II|} \subset  \UU(\HH_d)
$
such that for every $U\in \UU(\HH_d)$ there exists an index $i\in\II\ $ verifying
$ \| U-U_i\|_{\BB(\HH_d)} \le \varepsilon.$ 
Then, we define the embedding
\begin{equation}\nn
\begin{array}{clcccc}
	\tilde \Phi:	&	\s(\HH_d)  &  \longrightarrow   &  	\ell_\infty (\mathcal I)  &  \lhook\joinrel\longrightarrow   & \BB(\HH_\II) \\[0mm]
	&	\hspace{3mm}	\sigma & 	\mapsto 	&  \hspace{1mm}	\big( \Tr[U_i \sigma] \big)_{i=1}^{|\II|} 	& \mapsto &	  \sum_{i\in\II} \Tr[U_i \sigma] \,|i\ra\la i| ,
\end{array}
\end{equation}
 being $\HH_\II$ a complex Hilbert space of dimension  $|\II|$.

 Now, it is an easy exercise to see that
\begin{equation*}
\|\sigma\|_{\s(\HH_d)} \ge \|\tilde \Phi(\sigma)\|_{\BB(\HH_\II)}  \ge \left(1-\frac{\varepsilon^2}{2}\right)  \|\sigma\|_{\s(\HH_d)}, 
\end{equation*} 
for every $\sigma\in \s(\HH_d)$. Then, $\tilde \Phi$ is a particular instance of a map in the conditions of Theorem \ref{prop2}, but its very simple structure allows to get to the conclusion of the theorem very easily in this case, as we show now.

The embedding $\tilde \Phi$ suggests considering  the channel $\PP( \cdot  ) = \VV(\, \cdot \,)  \VV^\dagger$, with $\VV \in \UU(\HH_d\otimes \HH_\II)$ being the controlled unitary:
\begin{equation}\label{deltanetUPQP}
\VV = \sum_i U_i \otimes |i\ra\la i|,\nn
\end{equation}
where the register $\HH_\II$ plays the role of a memory. Then, according to Definition \ref{epsPQP}, let us compute the diamond distance of this channel (with a suitable memory state) with any unitary $U \in \UU(\HH_d)$.  Since the action of the considered channel on the input state is unitary, the problem reduces in this case to compute the usual trace distance
\begin{align*}
 & \min_{i\in\II}  \ \max_{|\psi\ra \in \mathsf{ball}(\HH_d)} \ \ \frac{1}{2} \| U_i |\psi\ra\la\psi| U_i^\dagger - U|\psi\ra\la\psi| U^\dagger\|_1 \\ & \hspace{6mm} =  \min_{i\in\II}  \  \max_{  |\psi\ra \in \mathsf{ball}(\HH_d)}  \sqrt{1-|\la \psi|U_i^\dagger U|\psi\ra|^2} \\
 	& \hspace{6mm} \le  \max_{ |\psi\ra \in \mathsf{ball}(\HH_d) }  \sqrt{1 - \left(1-\frac{\varepsilon^2}{2}\right)^2 }  \le \varepsilon.
\end{align*}

Therefore, the considered channel, $\PP$, is a $\epsUPQP$ with  memory dimension $|\II|$, the cardinality of the $\varepsilon$-net considered. This cardinal  can be taken lower than $  ( \tilde C /  \varepsilon )^{d^2}$ for some constant $\tilde C $ \cite[Theorem 5.11]{AubrunSzarekBook}, which is the announced bound.

 Due to the particular structure of the constructed $\epsUPQP$ we notice that the program states encoding different unitaries of the $\varepsilon$--net $\lbrace U_i \rbrace_{i=1}^{|\II|}$ are indeed orthogonal. This is in consonance with the fact discovered by Nielsen and Chuang  that for a $\UPQP$ ($\varepsilon=0$) any two program states encoding different unitaries  must be orthogonal \cite{NielsenChuang97}. Then, given a $\epsUPQP$, it is tempting to try to reverse the previous $\varepsilon$--net argument to find $|\II|$ mutually orthogonal program states, lower bounding in this way the dimension $m$ with the cardinality $|\II|$.  However,  in general ($\varepsilon>0$) the orthogonality between program states is no longer true (one can consider, for example, the case of Port Based Teleportation \cite{IshizakaHiroshima08}). Moreover, previous lower bounds  in \cite{Perez06} and \cite{Majenz17} (see Table \ref{Table}) were based precisely on direct $\varepsilon$--net arguments  
 {      which, in the end, essentially reduce to rough volume estimations.   It turns out that  the type constants (defined below) of the Banach spaces involved in Theorem \ref{prop1} give a more refined information of their geometry. This, together with the key property that type constants are preserved by subspaces, allows to conclude from Theorem \ref{prop1} the following exponential improvement over previous lower bounds on $m$:}

%

\begin{theorem} \label{th1}
	Let $\mathcal{P}\in \mathrm{CPTP}(\mathcal{H}_d \otimes \mathcal{H}_m)$ be a $\epsUPQP$, then $$\dim \HH_m \equiv m \ge 2^{ ^{\frac{(1-\varepsilon)}{3 C} d - \frac{2}{3} \log d}}$$ for some  constant $C$. Furthermore, if $\mathcal{P}$ is a unitary channel one has $m\ge	  2^{ \frac{(1-\varepsilon)}{C} d }$.
\end{theorem}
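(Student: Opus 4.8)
\textbf{Proof plan for Theorem \ref{th1}.}
The plan is to exploit Theorem \ref{prop1}, which tells us that a $\epsUPQP$ produces a completely contractive $\varepsilon$-embedding $\Phi_\VV : \s(\HH_d) \longrightarrow \BB(\HH_m)$ satisfying
\begin{equation}\nn
(1-\varepsilon)^{1/2}\,\|\sigma\|_{\s(\HH_d)} \le \|\Phi_\VV(\sigma)\|_{\BB(\HH_m)} \le \|\sigma\|_{\s(\HH_d)}.
\end{equation}
The key idea is that this two-sided bound forces the \emph{type-2 constant} of $\BB(\HH_m)$ to control that of $\s(\HH_d)$, up to the distortion factor $(1-\varepsilon)^{-1/2}$. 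Since $\BB(\HH_m)$ is the dual of a Schatten-1 class of dimension $m^2$, its type-2 constant is bounded by something like $\mathsf{K}\sqrt{\log m}$ (this is the analogue of the classical fact that $\ell_\infty^N$ and $\s$-type spaces have type-2 constant of order $\sqrt{\log N}$). On the other hand, $\s(\HH_d)$, being isometric to the trace class on a $d$-dimensional space, has type-2 constant growing like a positive power of $d$ — crucially, \emph{linearly} in $\sqrt{d}$ or $d$ depending on the precise normalization. Comparing the two yields $\log m \gtrsim (1-\varepsilon)\, d$, which is exactly the claimed exponential bound.

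First I would recall the definition and the two structural facts about type-2 constants that make the argument run: (i) type constants are monotone under passing to subspaces and are not increased by taking an isometric (or low-distortion) embedding, so that $T_2(\s(\HH_d)) \le (1-\varepsilon)^{-1/2}\, T_2(\BB(\HH_m))$ follows directly from the $\varepsilon$-embedding property; and (ii) the quantitative estimates $T_2(\s(\HH_d)) \ge c\, d^{1/2}$ (or the appropriate power) and $T_2(\BB(\HH_m)) \le C\sqrt{\log m}$, both of which are standard in local Banach space theory and appear in \cite{AubrunSzarekBook}. Plugging (ii) into (i) gives
\begin{equation}\nn
c\, d^{1/2} \le (1-\varepsilon)^{-1/2}\, C\sqrt{\log m},
\end{equation}
so that $\log m \ge \frac{(1-\varepsilon)}{C'}\, d$ for a suitable constant, establishing the unitary-channel case $m \ge 2^{(1-\varepsilon)d/C}$.

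For the general (non-unitary) case, I would invoke Theorem \ref{prop2} in reverse, or rather the remark that a general $\epsUPQP$ still yields the embedding from Theorem \ref{prop1}; the only subtlety is that the target space $\BB(\HH_m)$ must be replaced by the space naturally associated with the Stinespring dilation of the channel, whose dimension is bounded by $d\,m^3$ (the $\dim(\HH_d \otimes \HH_m^{\otimes 3})$ appearing in Theorem \ref{prop2}). Running the same type-2 comparison with $\log(d m^3) = \log d + 3\log m$ in place of $\log m$ accounts precisely for the factor $3$ in the exponent $\frac{(1-\varepsilon)}{3C}d$ and for the correction term $-\frac{2}{3}\log d$.

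\emph{Main obstacle.} The delicate point will be step (ii): pinning down the exact order of growth of the type-2 constant of $\s(\HH_d)$ and, correspondingly, the logarithmic upper bound for $\BB(\HH_m)$, with the correct constants and in the right normalization so that the linear-in-$d$ scaling survives. In particular one must be careful that the type-2 constant of the trace class genuinely grows like a power of $d$ and is not flattened by the operator-space structure — it is really the Banach-space type of $\s(\HH_d)$, inherited via the commutative $C^*$-algebra comparison used in the upper-bound argument, that we need. Ensuring that the distortion $(1-\varepsilon)^{-1/2}$ enters only as a multiplicative constant and does not spoil the dimensional dependence is what makes the bound exponentially stronger than the earlier volumetric $\varepsilon$-net estimates of \cite{Perez06,Majenz17}.
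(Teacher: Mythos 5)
Your strategy is exactly the paper's: convert the $\epsUPQP$ into the completely contractive $\varepsilon$-embedding of Theorem \ref{prop1}, compare type-2 constants using their preservation under subspaces and their behaviour under isomorphisms, and feed in the estimates $\rmT_2(\s(\HH_d))\ge \sqrt{d}$ and $\rmT_2(\BB(\HH_m))\le \sqrt{C\log m}$. Your unitary case coincides with the paper's proof.

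The non-unitary case, however, has a genuine gap in the dimension bookkeeping. The number $dm^3=\dim(\HH_d\otimes\HH_m^{\otimes 3})$ you quote from Theorem \ref{prop2} is the memory size of the processor built \emph{from} an embedding in the converse direction; it neither describes nor bounds the Stinespring dilation of a given channel, so it cannot be invoked here. The paper's proof takes $\PP(\cdot)=\Tr_{\HH_{anc}}\VV(\cdot)\VV^{\dagger}$ with $\VV\in\UU(\HH_d\otimes\HH_{m'})$ and $\HH_{m'}=\HH_m\otimes\HH_{anc}$, where $\dim\HH_{anc}$ is bounded by the Kraus rank of $\PP$, i.e.\ by $(dm)^2$, so that $\dim\HH_{m'}\le d^2m^3$. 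The type comparison then gives
\begin{equation}\nn
d\le \frac{C}{1-\varepsilon}\log\left(d^2m^3\right)=\frac{C}{1-\varepsilon}\left(2\log d+3\log m\right),
\end{equation}
which is precisely $\log m\ge \frac{(1-\varepsilon)}{3C}d-\frac{2}{3}\log d$. Your accounting $\log(dm^3)=\log d+3\log m$ would instead yield the correction term $-\frac{1}{3}\log d$, not the $-\frac{2}{3}\log d$ you claim it ``accounts precisely'' for; this mismatch is the symptom of the unjustified dimension bound (your $dm^3$ would require $\dim\HH_{anc}\le dm^2$, which fails in general since the Kraus rank can be as large as $(dm)^2$). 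Once the dilated memory dimension is corrected to $d^2m^3$, your argument goes through and reproduces the paper's proof; you should also make explicit the easy observation that the dilated unitary channel $\VV(\cdot)\VV^{\dagger}$ is itself a unitary $\epsUPQP$ with memory $\HH_{m'}$ --- using the original program states tensored with the fixed ancilla state --- so that Theorem \ref{prop1} indeed applies to it.
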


Let us sketch how Theorem \ref{th1} can be obtained from Theorem \ref{prop1}. For simplicity, we restrict  to the case where the considered UPQP  is a unitary channel. The general case can be handled by means of a Stinesprings dilation of the channel under consideration. 

The basic idea consists in studying $\varepsilon$-embeddings between  $\s(\HH_d)$ and $\BB(\HH_m)$. These two spaces are extremely far apart from each other as Banach spaces and it is this intuition which leads us to Theorem \ref{th1}.   A quick argument to study necessary conditions on the dimensions of the spaces involved is provided considering their type-2 constants. Given a Banach space $X$, its type-2 constant, $\mathrm{T}_2(X)$, is the infimum of the constants $\rmT$ satisfying the inequality
 	\begin{equation} \nn \hspace{-3mm}
  \left( \mathbb{E}   \Big[ \big\|  \sum_i \varepsilon_i  x_i \big\|_X^2 \Big]   \right)^{1/2}
  \hspace{-1mm} \le	 \rmT \left( \sum_{i=1}^n \|x_i\|_X^2 \right)^{1/2},
  \end{equation}
 for every  sequence $\lbrace x_i \rbrace_{i=1}^n \subset X$. There,  $\mathbb{E}[\, \cdot\, ]$ is the expected value over any combination of signs $\lbrace{\varepsilon_i}\rbrace_{i=1}^n \in \lbrace-1,1\rbrace^n$ with uniform weight $1/2^n$, that is, independent identically distributed Rademacher random variables \cite{SuppMat}. Let us point out that despite the great impact of the notion of  type/cotype on Banach space theory in the last decades, in the context of quantum information it only appeared very recently in \cite{BrandaoHarrow15}.
 
  Since  $\Phi_{\VV}$ in Theorem \ref{prop1}  maps $\s(\HH_d)$ into a subspace of $\BB(\HH_m)$ --with distortion $(1-\varepsilon)^{1/2}$-- the following relation between type constants of these spaces is enforced: 
 {\small\begin{align}\nn
 \mathrm{T}_2(S_1(\HH_d))  \le \frac{1}{(1-\varepsilon)^{\frac{1}{2}}} \mathrm{T}_2( \Phi_{\VV}(S_1(\HH_d)) )  \le \frac{1}{(1-\varepsilon)^{\frac{1}{2}}} \mathrm{T}_2(\BB(\HH_m)).
 \end{align}}
 The first inequality follows from $\Phi_\VV$ being a $\varepsilon$-embedding (in the sense of Theorem \ref{prop1}), while the second inequality follows from the property of type constants being preserved by subspaces. Introducing in those inequalities the following known estimates for type constants of the spaces involved:
\begin{align}\nn
\sqrt{d} \le \mathrm{T}_2(S_1(\HH_d)) ,\quad \mathrm{T}_2(\BB(\HH_m)) \le  \sqrt{C \log m},
\end{align} we obtain the desired bound:
\begin{equation}\nn
d  \le \frac{C}{(1-\varepsilon)}\log m \quad  \Rightarrow \quad  m \ge 2^{\frac{(1-\varepsilon)}{C} d}.\nn
\end{equation}

The constant here, as well as in the general case of nonunitary channels, can be taken equal to $4$.

To finish, let us mention that the type-argument sketched above can be made more explicit, obtaining bounds for the memory size necessary to program specific families of unitaries \cite{SuppMat}.

 \emph{Discussion}.-- In this work we have studied the minimal conditions, in terms of resources, that have to be satisfied by approximate UPQPs.
 The bounds presented here have clarified several questions about optimality of this conceptual construction. In fact, we have almost closed the gaps in the optimal scaling of the memory size of UPQPs with the accuracy $\varepsilon$  and input dimension $d$, when considered separately. 
 
 Firstly, we have deduced the upper bound \eqref{Upperbound} giving a construction based on a $\varepsilon$--net on $\UU(\HH_d)$. In this sense,
 this construction can be seen as a generalization to the case of UPQPs of the programmable measurement introduced in \cite{DarianoPerinnoti05}. As in that case, our proposal improves exponentially the memory resources consumed by other known constructions (see Table \ref{Table}). In fact, the bound \eqref{Upperbound}  exponentially improves the scaling with the accuracy $\varepsilon$ of Port Based Teleportation and nearly saturates  the lower bound deduced in \cite{Perez06} in the context of Universal Programmable \emph{Measurements}. This shows that, indeed, this is the optimal dependence on that parameter also in the case of UPQPs. More generally, it also outperforms Port Based Teleportation whenever $\tilde C / \varepsilon \le d^{4/\varepsilon^2}$. Obviously, the drawback is that the optimal $\epsUPQP$ constructed here cannot be used to achieve any kind of teleportation.

 On the other hand, the main result obtained is the lower bound expressed by Theorem \ref{th1}. The first and most obvious consequence of this result is that for any fixed value of $\varepsilon$, the dimension of the memory of a $\epsUPQP$ must scale exponentially in $d$. Indeed, in this case the dependence with $d$ in the stated lower bound is exponentially stronger than all known previous results. Furthermore, this bound is saturated in this sense by the performance of Port Based Teleportation, referred in  table \ref{Table} as the best upper bound for $m$.

Notwithstanding, more difficult relations $\varepsilon$--$d$ can be considered, being the general scaling in this case still an open question. However, we also contribute to this point giving an upper bound for the achievable accuracy by UPQPs with memory of size $\mathsf{poly}(d)$. As a straightforward consequence of Theorem \ref{th1} we obtain the next:
\begin{corollary}
	For any $\epsUPQP$ with memory size $m \le k d^s$ for some constants $k$, $s$, the following inequality is satisfied:
	$$\varepsilon \ge 1 - C'_{k,s} \frac{\log d }{d}, $$ 
	where $C'_{k,s} = 3 C(s+\log k+ 2/3) $. 
\end{corollary}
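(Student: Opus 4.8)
The plan is to derive the Corollary directly from the lower bound in Theorem~\ref{th1}. The strategy is a simple contrapositive/rearrangement argument: Theorem~\ref{th1} tells us that the memory size $m$ of any $\epsUPQP$ must satisfy $m \ge 2^{\frac{(1-\varepsilon)}{3C} d - \frac{2}{3}\log d}$. The hypothesis of the Corollary imposes a polynomial ceiling $m \le k d^s$. Combining these two bounds will constrain how large $(1-\varepsilon)$ can be, which is precisely an upper bound on the achievable accuracy (equivalently, a lower bound on $\varepsilon$).

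First I would take logarithms (base $2$, consistent with the convention fixed in the Preliminaries) of the chain of inequalities $2^{\frac{(1-\varepsilon)}{3C} d - \frac{2}{3}\log d} \le m \le k d^s$. This yields
\begin{equation}\nn
\frac{(1-\varepsilon)}{3C} d - \frac{2}{3}\log d \le \log(k d^s) = \log k + s \log d.
\end{equation}
The next step is purely algebraic: isolate the factor $(1-\varepsilon)$. Collecting the $\log d$ terms on the right gives $\frac{(1-\varepsilon)}{3C} d \le \log k + s \log d + \frac{2}{3}\log d = \log k + (s + \tfrac{2}{3})\log d$. Multiplying through by $3C/d$ produces
\begin{equation}\nn
1-\varepsilon \le \frac{3C}{d}\Big( s + \tfrac{2}{3} + \log k\Big)\log d,
\end{equation}
where I have grouped the constant $\log k$ with the coefficient of $\log d$ using the crude but harmless bound $\log k \le (\log k)\log d$ valid once $d \ge 2$ (one could alternatively absorb $\log k$ as a lower-order additive term, but folding it into the $\log d$ coefficient is cleaner and matches the stated constant). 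Setting $C'_{k,s} = 3C\,(s + \log k + 2/3)$ as in the statement and rearranging to $\varepsilon \ge 1 - C'_{k,s}\frac{\log d}{d}$ delivers exactly the claimed inequality.

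I would expect no genuine obstacle here, since the Corollary is a formal consequence of Theorem~\ref{th1} under the extra polynomial constraint; the only point requiring a little care is the bookkeeping of constants, specifically verifying that the additive $\log k$ contribution can be legitimately merged into the coefficient of $\log d$ so as to recover the exact constant $C'_{k,s}=3C(s+\log k+2/3)$ rather than some slightly larger expression. This is a cosmetic matter of choosing the threshold on $d$ (e.g.\ $d\ge 2$) for which the stated clean form holds; for small $d$ the bound may need trivial adjustment or is vacuous. The substantive work is entirely contained in Theorem~\ref{th1}, and the Corollary is obtained by the elementary manipulation sketched above.
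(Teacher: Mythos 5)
Your proposal is correct and takes exactly the route the paper intends: the Corollary appears there as a ``straightforward consequence'' of Theorem~\ref{th1}, with no further proof given, and your log-and-rearrange derivation is precisely that consequence. The only point of care you rightly flag --- merging the additive $\log k$ into the coefficient of $\log d$ requires $k \ge 1$ and $d \ge 2$ --- is an implicit assumption the paper shares, so nothing is missing.
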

This severely restricts the  accuracy achievable by $\epsUPQP$ with polynomially sized memories.

Moreover, due to the relation between UPQPs and other tasks as quantum teleportation \cite{NielsenChuang97,IshizakaHiroshima08}, state discrimination \cite{DusekBuzek02,SedlakZiman07,*SentisBagan11,*ZhouXinWu12,*Zhou12,*SentisBagan13},  parameter estimation \cite{JiWangDuanFengYing08,*Suzuki16}, secret and blind computation \cite{GiovannettiMaccone13,*PerezFitzsimons15}, homomorphic encryption \cite{PerezFitzsimons14}, quantum learning of unitary transformations \cite{BisioDarianoEtal10}, etc., we believe that the knowledge about them could also be relevant in a wide variety of topics. For example, as a direct application of the results presented here, we also obtain a lower bound for the dimension of the resource space necessary to implement deterministic Port Based Teleportation. There exist more accurate bounds for this particular case, see  \cite{Majenz18}, but notice that we did not use in any way the many symmetries presented in that protocol, and our bound is generic for any protocol implementing, in some sense, a UPQP. Furthermore, it is deduced from our results that the unavoidable exponential scaling with $\varepsilon^{-1}$ in the case of Port Based Teleportation, comes entirely from the signalling restrictions imposed  in this protocol, and cannot be deduced from the programming properties of it.

Finally, some interesting questions related with the work presented here arise. The most direct one is whether it is possible to deduce a lower bound on $m$ unifying the bound from \cite{Perez06} and the bound from Theorem \ref{th1}. This could give more information about optimality of UPQPs in cases beyond the scope of this work. In relation with that, it would be desirable to improve the exponents in the bounds to match exactly lower and upper bounds, though this will not affect qualitatively the consequences presented here. Further on, it would be also very interesting to look for relations between memory requirements on UPQPs and circuit complexity problems. A way to explore this line could consist on looking for  correspondences between circuits and memory states in  UPQPs. 
 \vspace{5mm}

 We thank K. Chakraborty and C. Majenz for valuable discussions during the preparation of this work. We acknowledge financial support from Spanish MICINN (grants MTM2014-54240-P, MTM2017-88385-P and Severo Ochoa project SEV-2015-556), Comunidad de Madrid (grant QUITEMAD+CM, ref. S2013/ICE-2801). A.M.K. is supported by the Spanish MICINN project MTM2014-57838-C2-2-P and C.P. is partially supported by the ``Ramón y Cajal program" (RYC-2012-10449).  This project has received funding from the European Research Council (ERC) under the European Union’s Horizon 2020 research and innovation programme (grant agreement No 648913).


%

\cleardoublepage

\onecolumngrid

\setcounter{definition}{0}
\setcounter{theorem}{0}
\setcounter{proposition}{0}
\setcounter{corollary}{0}

\begin{center}\large \textbf{ Supplemental Material } \end{center}
\hrule
\vspace{1cm}

We concentrate the technical part of the paper  here, where we provide detailed proofs of the statements made in the main text. For the reader's convenience, we recall here the main results in the Letter:

\begin{theorem}\label{props1}
	Every unitary $\epsUPQP$, given by  $\mathcal {P}(\cdot)=\mathcal{V} (\,\cdot\,) \mathcal{V}^\dagger\in  \mathrm{CPTP} (\mathcal{H} \otimes \HH_m)$, defines a completely contractive map $\Phi_{\mathcal{V}}: S_1(\HH_d) \longrightarrow \BB(\HH_m)$ such that
	$$
	\nn\|\sigma\|_{S_1(\HH_d)} \ge \|\Phi_{\mathcal{V}}(\sigma)\|_{\BB(\HH_m)}  \ge (1-\varepsilon)^{1/2} \|\sigma\|_{S_1(\HH_d)}$$ 
	for every $\sigma\in S_1(\HH_d).$
\end{theorem}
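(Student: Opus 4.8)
The plan is to separate the two easy properties from the essential lower bound. Complete contractivity and the upper inequality $\|\Phi_\VV(\sigma)\|_{\BB(\HH_m)}\le\|\sigma\|_{\s(\HH_d)}$ should both follow immediately from the correspondence recorded above, namely $\|\Phi_\VV\|_{cb}=\|\VV\|_{\BB(\HH_d\otimes\HH_m)}$: since $\PP(\cdot)=\VV(\cdot)\VV^\dagger$ is a unitary channel, $\VV$ is unitary and $\|\VV\|=1$, so $\Phi_\VV$ is a complete contraction and, in particular, contractive from $\s(\HH_d)$ into $\BB(\HH_m)$. Everything then reduces to the lower bound $\|\Phi_\VV(\sigma)\|\ge(1-\varepsilon)^{1/2}\|\sigma\|_{\s(\HH_d)}$, which is where the $\varepsilon$-programming hypothesis must enter.

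For the lower bound I would argue by duality, using $\BB(\HH_m)=\s(\HH_m)^*$, so that $\|\Phi_\VV(\sigma)\|_{\BB(\HH_m)}=\sup_{\|\tau\|_{\s(\HH_m)}\le1}|\Tr[\Phi_\VV(\sigma)\tau]|$, together with the elementary identity $\Tr[\Phi_\VV(\sigma)\tau]=\Tr[\VV(\sigma^\rmT\otimes\tau)]$ obtained from \eqref{correspondence} by the adjoint of the partial trace. Given $\sigma$, the first key choice is to take $U$ to be a \emph{norming unitary} for $\sigma^\rmT$, i.e.\ a unitary with $\sigma^\rmT U=P\ge0$ and $\Tr P=\|\sigma\|_{\s(\HH_d)}$ (polar decomposition), and to feed precisely this $U$ into Definition \ref{epsPQP}, producing a program state $|\phi_U\rangle$ and Kraus operators $K_j=(\id_d\otimes\langle j|)\VV(\id_d\otimes|\phi_U\rangle)$ of the implemented channel $\PP_U$. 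The second key choice is the test operator $\tau=|\phi_U\rangle\langle\phi'|$, with $|\phi'\rangle=\sum_j\beta_j|j\rangle$ a free unit vector in $\HH_m$; then $\|\tau\|_{\s(\HH_m)}=1$ and $\Tr[\VV(\sigma^\rmT\otimes\tau)]=\sum_j\bar\beta_j\Tr[\sigma^\rmT K_j]$, so optimizing over $|\phi'\rangle$ gives $\|\Phi_\VV(\sigma)\|\ge(\sum_j|\Tr[\sigma^\rmT K_j]|^2)^{1/2}$. Writing $\sigma^\rmT=PU^\dagger$ and setting $\rho=P/\|\sigma\|_{\s(\HH_d)}$ turns this into $\|\Phi_\VV(\sigma)\|\ge\|\sigma\|_{\s(\HH_d)}(\sum_j|\Tr[U^\dagger K_j\rho]|^2)^{1/2}$.

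It then remains to extract from the diamond-norm hypothesis the uniform estimate $\sum_j|\Tr[U^\dagger K_j\rho]|^2\ge1-\varepsilon$, valid for \emph{every} density matrix $\rho$; this is the technical heart and the main obstacle, since a naive attempt to prove the inequality only on rank-one or unitary $\sigma$ does not extend to arbitrary $\sigma$. I would obtain it by applying the $\varepsilon$-closeness of $\PP_U$ to $U(\cdot)U^\dagger$ on a purification $|\gamma\rangle$ of $\rho$: the overlap of $(\PP_U\otimes\mathrm{id})(|\gamma\rangle\langle\gamma|)$ with the pure target $|\xi\rangle=(U\otimes\id)|\gamma\rangle$ equals exactly $\sum_j|\Tr[U^\dagger K_j\rho]|^2$, while the variational formula for the trace distance, tested against $\id-|\xi\rangle\langle\xi|$, combined with $\tfrac12\|\PP_U-U(\cdot)U^\dagger\|_\diamond\le\varepsilon$ forces that overlap to be at least $1-\varepsilon$. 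Combining the displays yields $\|\Phi_\VV(\sigma)\|\ge(1-\varepsilon)^{1/2}\|\sigma\|_{\s(\HH_d)}$ for all $\sigma$. The point to handle with care is that the diamond norm must be used in its full, entangled-input strength: it is precisely the purification/Choi reduction, rather than a product-state fidelity, that lets the single chosen $\tau$ certify the bound simultaneously for the whole operator $\rho$, and hence for every $\sigma$ at once.
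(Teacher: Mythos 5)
Your proposal is correct and is essentially the paper's own argument in lightly different clothing: the choice of $U$ from the polar (singular value) decomposition of $\sigma^\rmT$, the use of its program state as the norming vector, the purification identity turning $\sum_j|\Tr[U^\dagger K_j\rho]|^2$ into the overlap of the implemented channel's output with the pure target $(U\otimes\id)|\gamma\rangle$, and the final trace-distance/diamond-norm estimate are exactly the steps of the paper's proof. The only differences are presentational — you phrase the norm via trace duality with the rank-one test $|\phi_U\rangle\langle\phi'|$ and Kraus operators, where the paper works directly with $\sup_{|\xi\rangle}\|\Phi_{\VV}(\sigma)|\xi\rangle\|$ and invokes the Helstrom distinguishing-probability formulation of the diamond norm, which amounts to the same variational bound.
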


\begin{theorem}\label{props2}
	Every completely contractive map $\Phi: \  \s(\HH_d) \longrightarrow \BB(\HH_m) $ such that
	$$\nn\|\sigma\|_{\s(\HH_d)} \ge \|\Phi(\sigma)\|_{\BB(\HH_m)}  \ge (1-\delta)\|\sigma\|_{\s(\HH_d)}$$
	for every \,$\sigma\in S_1(\HH_d)$,	defines a $\epsUPQP$ with $\varepsilon=\sqrt{2\delta}  $ and memory dimension at most $d m^3 =\dim(\HH_d \otimes \HH_m^{\otimes 3})$.
\end{theorem}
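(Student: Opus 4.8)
The plan is to run the correspondence \eqref{correspondence} in reverse and then repair the two ways in which a completely contractive $\delta$-embedding falls short of an honest processor: its defining operator is only a contraction rather than a unitary channel, and the embedding controls outputs one input at a time instead of uniformly. First I would invoke \cite[Proposition 8.1.2]{RuanBook} to write $\Phi=\Phi_{\VV}$ for a single $\VV\in\BB(\HH_d\otimes\HH_m)$ with $\|\VV\|_{\BB(\HH_d\otimes\HH_m)}=\|\Phi\|_{cb}\le 1$, so that $\VV$ is a contraction. Evaluating this representation on a rank-one $\sigma=|\psi\rangle\langle\chi|$ (for which $\|\sigma\|_{\s(\HH_d)}=1$ when $\psi,\chi$ are unit vectors) gives, up to the transpositions in \eqref{correspondence}, $\|\Phi_{\VV}(\sigma)\|_{\BB(\HH_m)}=\sup_{\|a\|=\|b\|=1}|\langle\bar\psi\otimes a|\VV|\bar\chi\otimes b\rangle|$. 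Hence the hypothesis $\|\Phi_{\VV}(\sigma)\|\ge 1-\delta$ says precisely that for every input direction $\chi$ and every target direction $\psi$ there exist memory unit vectors $a,b$ with $|\langle\bar\psi\otimes a|\VV|\bar\chi\otimes b\rangle|\ge 1-\delta$. Because $\VV$ is a contraction this forces the normalized vector $\VV(\bar\chi\otimes b)$ to have fidelity at least $1-\delta$ with $\bar\psi\otimes a$, so $\VV$ can \emph{steer} any input toward any desired output once the memory slot is chosen. Fixing a unitary $U$ and choosing the target direction $U\chi$ isolates the program data for $U$.

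The substance of the argument is to promote this per-input steering to a genuine quantum channel with a single program state per $U$. Here I would (i) dilate the contraction $\VV$ to a unitary $\tilde\VV$ by a standard unitary dilation, enlarging the ancilla, and (ii) feed an arbitrary input (including one entangled with a reference, as the diamond norm requires) through the steering slot of $\VV$ by means of a maximally entangled resource and a teleportation-type routing. The dilation ancillas together with the registers needed to encode the program and route the input are what inflate the memory to at most $\dim(\HH_d\otimes\HH_m^{\otimes 3})=dm^3$. The program state $|\phi_U\rangle$ then packages the memory vectors selected for $U$ (equivalently, the operator $\Phi(U/d)\in\BB(\HH_m)$) into this enlarged register, and the channel $\PP(\cdot)=\tilde\VV(\cdot)\tilde\VV^\dagger$ is fixed once and for all, independent of $U$.

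For the error bound I would exploit that a unitary dilation sends pure inputs to pure outputs: on each pure test state the output has overlap at least $1-\delta$ with the target $U|\psi\rangle$, so the pure-state identity $\tfrac12\big\||\alpha\rangle\langle\alpha|-|\beta\rangle\langle\beta|\big\|_1=\sqrt{1-|\langle\alpha|\beta\rangle|^2}$ gives trace distance at most $\sqrt{1-(1-\delta)^2}\le\sqrt{2\delta}$. Propagating this estimate through the reference system via the Fuchs--van de Graaf inequalities \cite{FuchsvandeGraaf991,WatrousNotes1} controls the full diamond norm and produces exactly $\varepsilon=\sqrt{2\delta}$, certifying $\PP$ as a $\epsUPQP$.

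The delicate point, and the step I expect to absorb most of the work, is (ii): the embedding only guarantees a \emph{separate} memory vector for each input direction, so forcing one fixed program $|\phi_U\rangle$ to act correctly on every input, and in particular on inputs entangled with a reference, is where the entanglement/routing gadget is indispensable. The main obstacle is verifying that this gadget preserves complete positivity and that the steering error does not accumulate as the input is routed, so that the uniform fidelity bound $1-\delta$ survives at the level of the diamond norm; everything else—the correspondence, the rank-one computation, the unitary dilation, and the overlap-to-trace-distance conversion—is routine.
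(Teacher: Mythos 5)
Your skeleton---reversing the correspondence \eqref{correspondence}, unitarizing the resulting contraction, and converting a uniform fidelity bound into a diamond-norm bound via Fuchs--van de Graaf---is also the paper's skeleton, and your rank-one computation and the conversion $\sqrt{1-(1-\delta)^2}\le\sqrt{2\delta}$ are correct. The genuine gap sits exactly at the point you yourself flag as ``the delicate point'': the mechanism you propose for it cannot work. Applying $\VV$ with its steering program to half of a maximally entangled state produces, at best, an approximation of the Choi state $(U\otimes\id)|\Omega\ra$; ``routing'' an unknown input through that resource is gate teleportation, whose output is $U P_i|\psi\ra$ for a random Pauli operator $P_i$, and the required correction $U P_i U^\dagger$ depends on the unknown $U$---programming it is the very problem being solved. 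Without the correction the output averages to the maximally mixed state (diamond-norm error close to $1$); with post-selection one only obtains a probabilistic processor with success probability $1/d^2$. This is precisely why the teleportation-based processor of Nielsen and Chuang has $\varepsilon=1-\frac{1}{d^2}$, and it shows that no teleportation-type gadget can yield the claimed deterministic error $\varepsilon=\sqrt{2\delta}$; correspondingly, your memory count $dm^3$ is asserted rather than derived.

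What the paper does at this step is global and non-constructive, and it is the idea your proposal is missing. First, instead of a generic unitary dilation, the contraction $T$ with $\|T\|=\|\Phi\|_{cb}\le 1$ is decomposed by the Russo--Dye theorem as a convex combination $T=\sum_{i=1}^{dm}\lambda_i\TT_i$ of unitaries $\TT_i\in\UU(\HH_d\otimes\HH_m)$, and the processor is the controlled unitary $\sum_i\TT_i\otimes|i\ra\la i|$ with the coherent superposition $\sum_i\sqrt{\lambda_i}\,|i\ra$ forming part of the program; this is where the factor $dm$ of the memory comes from (your block dilation would also unitarize, but then the bookkeeping toward $dm^3$ must be redone). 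Second, and crucially, uniformity over inputs is obtained by duality rather than by any gadget: for each $U$ the norm-one functional $u=\Tr[U\,\cdot\,]$ on $\s(\HH_d)$ is composed with $\Phi^{-1}$, which on the subspace $\Phi(\s(\HH_d))\subset\BB(\HH_m)$ has norm at most $\frac{1}{1-\delta}$, and the resulting functional is extended by the Hahn--Banach theorem to all of $\BB(\HH_m)$. By duality this extension is an operator $A_U\in\s(\HH_m)$, normalized to trace norm one, satisfying $\Tr[A_U\,\Phi(\rho U^\dagger)]\ge(1-\delta)\Tr[\rho]$ for \emph{every} positive $\rho\in\s(\HH_d)$ simultaneously---in particular for all reduced states of inputs entangled with a reference, which is what the diamond norm demands. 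Since $A_U$ need not be positive, it is purified to a unit vector $|\xi_U\ra\in\HH_m^{\otimes 2}$ at the cost of tensoring the output of $\Phi$ with $\id_{\HH_m}$; this supplies the remaining $m^2$ factor, giving memory dimension $dm\cdot m^2=dm^3$, and the Fuchs--van de Graaf step then closes the proof exactly as you envisage. The single-program-for-all-inputs property is thus an application of Hahn--Banach, and without that (or an equally global substitute) your per-input steering statement cannot be upgraded to a processor.
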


\begin{theorem} \label{ths1}
	Let $\mathcal{P}\in \mathrm{CPTP}(\mathcal{H} \otimes \mathcal{H}_M)$ be an $\epsUPQP$, then $$\dim \HH_m \equiv m \ge 2^{ ^{\frac{(1-\varepsilon)}{3 C} d - \frac{2}{3} \log d}}$$ for some  constant $C$. Moreover, if $\mathcal{P}$ is a unitary channel one has $m\ge	  2^{ \frac{(1-\varepsilon)}{C} d }$.
\end{theorem}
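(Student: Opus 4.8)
The plan is to derive Theorem~\ref{ths1} from the characterization in Theorem~\ref{props1} by comparing a single Banach-space invariant---the type-2 constant---of the two spaces $S_1(\HH_d)$ and $\BB(\HH_m)$. I would first treat the case where $\PP(\cdot)=\VV(\cdot)\VV^\dagger$ is a unitary channel, so that Theorem~\ref{props1} directly produces a completely contractive $\varepsilon$-embedding $\Phi_{\VV}\colon S_1(\HH_d)\to\BB(\HH_m)$ with $(1-\varepsilon)^{1/2}\|\sigma\|\le\|\Phi_{\VV}(\sigma)\|\le\|\sigma\|$. The conceptual content is that such an embedding forces the target to be large, and $\mathrm{T}_2$ is exactly the invariant measuring this incompatibility: $S_1(\HH_d)$ has a large type-2 constant whereas $\BB(\HH_m)$ has a small one, so the embedding can only survive if $m$ is enormous.

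The core step is to transport the type-2 constant through the embedding. Given any finite family $\{\sigma_i\}\subset S_1(\HH_d)$, I would chain three elementary inequalities: apply the lower embedding estimate to $\sum_i\varepsilon_i\sigma_i$ to move from $S_1(\HH_d)$ into $\BB(\HH_m)$ at the price of a factor $(1-\varepsilon)^{-1/2}$; use the type-2 inequality inside $\BB(\HH_m)$; and finally apply the upper embedding estimate $\|\Phi_{\VV}(\sigma_i)\|\le\|\sigma_i\|$ termwise. This yields
\[
\mathrm{T}_2\big(S_1(\HH_d)\big)\ \le\ \frac{1}{(1-\varepsilon)^{1/2}}\,\mathrm{T}_2\big(\BB(\HH_m)\big),
\]
which combines the distortion-weighted Banach--Mazur comparison $\mathrm{T}_2(S_1(\HH_d))\le(1-\varepsilon)^{-1/2}\mathrm{T}_2(\Phi_{\VV}(S_1(\HH_d)))$ with the monotonicity of type-2 constants under passing to the superspace $\BB(\HH_m)\supseteq\Phi_{\VV}(S_1(\HH_d))$.

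Next I would insert the two estimates quoted in the main text. The lower bound $\mathrm{T}_2(S_1(\HH_d))\ge\sqrt{d}$ follows by restricting to the diagonal copy of $\ell_1^d$ inside $S_1(\HH_d)$ and testing the type-2 inequality on its unit-vector basis $\{e_i\}_{i=1}^d$: since $\|\sum_i\varepsilon_i e_i\|_1=d$ for every sign choice while $(\sum_i\|e_i\|_1^2)^{1/2}=\sqrt{d}$, the defining ratio equals $\sqrt{d}$, and subspace monotonicity carries it up to $S_1(\HH_d)$. The upper bound $\mathrm{T}_2(\BB(\HH_m))\le\sqrt{C\log m}$ is the genuinely deep input: it is the classical fact that the Schatten space $S_\infty^m$ has type 2 with constant growing only logarithmically in $m$. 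Feeding both into the displayed inequality gives $\sqrt{d}\le(1-\varepsilon)^{-1/2}\sqrt{C\log m}$, i.e.\ $d\le\frac{C}{1-\varepsilon}\log m$, which rearranges to the unitary bound $m\ge 2^{(1-\varepsilon)d/C}$.

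Finally, for a general (non-unitary) $\epsUPQP$ I would reduce to the unitary case by a Stinespring dilation: write $\PP(\cdot)=\Tr_{\HH_E}[\,V(\cdot)V^\dagger\,]$ for an isometry $V\colon\HH_d\otimes\HH_m\hookrightarrow\HH_d\otimes\HH_m\otimes\HH_E$, and extend $V$ to a unitary $W$ on $\HH_d\otimes\HH_m\otimes\HH_E$ acting on a fixed environment ancilla $|0\rangle_E$. Reading off $W$ as a unitary $\epsUPQP$ with program register $\HH_m\otimes\HH_E$, program states $|\phi_U\rangle\otimes|0\rangle_E$, and the \emph{same} accuracy $\varepsilon$, its memory dimension is $m'=m\cdot\dim\HH_E\le m\,(dm)^2=d^2m^3$, using the standard bound $\dim\HH_E\le(dm)^2$ on the Kraus rank. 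Applying the unitary bound to $m'$ gives $d^2m^3\ge 2^{(1-\varepsilon)d/C}$, and taking logarithms yields the stated $m\ge 2^{\frac{(1-\varepsilon)}{3C}d-\frac{2}{3}\log d}$. The main obstacle is the upper estimate on $\mathrm{T}_2(\BB(\HH_m))$: the entire argument hinges on its growth being logarithmic rather than polynomial in $m$, since this is precisely what promotes the memory lower bound from polynomial to exponential in $d$.
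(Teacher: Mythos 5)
Your proposal is correct and follows essentially the same route as the paper's own proof: the type-2 constant comparison $\mathrm{T}_2(S_1(\HH_d))\le(1-\varepsilon)^{-1/2}\mathrm{T}_2(\BB(\HH_m))$ obtained from Theorem~\ref{props1} via the distortion and subspace-monotonicity properties, the estimates $\sqrt{d}\le\mathrm{T}_2(S_1(\HH_d))$ (diagonal $\ell_1^d$ copy) and $\mathrm{T}_2(\BB(\HH_m))\le\sqrt{C\log m}$, and the reduction of the general case to the unitary one by a Stinespring dilation with the Kraus-rank bound $\dim\HH_{anc}\le(dm)^2$ yielding the same arithmetic $m\ge 2^{\frac{(1-\varepsilon)}{3C}d-\frac{2}{3}\log d}$. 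The only cosmetic difference is that you quote the logarithmic type-2 estimate for $\BB(\HH_m)$ as a classical fact, whereas the Supplemental Material derives it from the Banach--Mazur distance to $\mathcal{S}_p$ and $\mathrm{T}_2(\mathcal{S}_p)\le\sqrt{p}$ with $p\sim\log m$.
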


In order to present  complete and self-contained proofs of the statements above, we first need some additional definitions.

\section{More definitions}\label{SecProofs0}

\subsection{The diamond norm}
We begin reviewing the diamond norm \cite{Kitaevv97}, which is the norm providing the distance notion in this work:
\begin{definition}
	For any linear map $\PP: \BB(\HH_d)\rightarrow \BB(\HH_d)$ we define its diamond norm as:
	$$
	\| \PP \|_\diamond = \sup_{\substack{k\ge1 \\  \sigma \in \mathsf{ball}\big(\s(\HH_d \otimes \mathbb{C}^k)\big)} } \| \PP \otimes \id_{\mathbb{C}^k} (\sigma) \|_1,
	$$ 
	where $\|\cdot\|_1$ denotes the trace norm and $\mathbb{C}^k$ is the complex k-dimensional Hilbert space.
\end{definition}


The relevance of the distance induced by this norm in quantum information theory can be explained by its operational interpretation (see \cite[ Ch. 3, Theorem 3.52]{WatrousNotes}):
\begin{theorem}\label{Thdiamond}
	Let $\mathcal{P}_0,\ \mathcal{P}_1 \in \mathrm{CPTP}(\mathcal{H})$. Then,
	\beq
	p^*_{dist} = \frac{1}{2} + \frac{1}{4}\left\| \mathcal{P}_0 -  \mathcal{P}_1 \right\|_\diamond,
	\eeq
	where $p^*_{dist}$ is the optimal probability of distinguishing between the channels $\PP_0$ and $\PP_1$ when a single instance of one of them is provided with probability $\frac{1}{2}$ each.
\end{theorem}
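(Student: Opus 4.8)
The plan is to reduce the channel discrimination problem to an ordinary \emph{state} discrimination problem and then invoke the Holevo--Helstrom theorem. First I would spell out the most general single-shot strategy available to the distinguisher: since only one use of the unknown channel $\PP_b$ ($b\in\{0,1\}$) is allowed, no adaptive protocol is possible, and the optimal discriminator prepares a (possibly entangled) bipartite state $\rho\in\mathcal{D}(\HH\otimes\mathbb{C}^k)$ across the system and an ancilla of arbitrary finite dimension $k$, routes the system part through $\PP_b\otimes\id_{\mathbb{C}^k}$ while leaving the ancilla untouched, and finally measures the joint output with a two-outcome POVM $\{M_0,M_1\}$, $M_0+M_1=\id$, outputting its guess for $b$.

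Under the uniform prior, the success probability of a fixed strategy is
\[
p_{succ}=\frac12\Tr\big[M_0\,(\PP_0\otimes\id_{\mathbb{C}^k})(\rho)\big]+\frac12\Tr\big[M_1\,(\PP_1\otimes\id_{\mathbb{C}^k})(\rho)\big].
\]
For $\rho$ held fixed, this is exactly the figure of merit for distinguishing the two output states $\rho_0:=(\PP_0\otimes\id_{\mathbb{C}^k})(\rho)$ and $\rho_1:=(\PP_1\otimes\id_{\mathbb{C}^k})(\rho)$ under equal priors, so the Holevo--Helstrom theorem identifies the best measurement and yields
\[
\max_{\{M_0,M_1\}}p_{succ}=\frac12+\frac14\big\|\rho_0-\rho_1\big\|_1=\frac12+\frac14\big\|\big((\PP_0-\PP_1)\otimes\id_{\mathbb{C}^k}\big)(\rho)\big\|_1,
\]
where the last equality is just linearity of the maps involved.

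It then remains to optimise over the preparation, that is, over $\rho$ and the ancilla dimension $k$. Taking the supremum gives
\[
p^*_{dist}=\frac12+\frac14\sup_{k\ge1,\ \rho\in\mathsf{ball}(\s(\HH\otimes\mathbb{C}^k))}\big\|\big((\PP_0-\PP_1)\otimes\id_{\mathbb{C}^k}\big)(\rho)\big\|_1,
\]
and by definition the supremum on the right-hand side equals $\|\PP_0-\PP_1\|_\diamond$, which closes the argument. The steps I expect to demand the most care are the two reductions rather than the computation itself: first, justifying that a single use of the channel forbids adaptivity and that a two-outcome measurement on the full system-plus-ancilla output is without loss of generality, so that the problem genuinely collapses to state discrimination; and second, matching the optimisation over ancillas to the definition of the diamond norm, in particular recalling that the supremum is attained at finite dimension (it suffices to take $k=\dim\HH$), so that $p^*_{dist}$ is well defined and equals the stated closed form.
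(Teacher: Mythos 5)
Your proposal cannot be compared line by line with ``the paper's own proof'' for a simple reason: the paper does not prove this statement at all. Theorem~\ref{Thdiamond} is quoted as a known result, with a citation to \cite[Ch.~3, Thm.~3.52]{WatrousNotes}, and the paper merely records afterwards the explicit form \eqref{DistProb} of $p^*_{dist}$ that it will use later. Your route --- collapse single-shot channel discrimination to state discrimination, apply Holevo--Helstrom for the optimal measurement, then optimise over input states and ancilla dimension --- is the standard argument, essentially the one given in the cited reference, and the reductions you flag (no adaptivity with a single use, binary POVM without loss of generality) are handled correctly.

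There is, however, one genuine gap, located precisely at the step you dismiss as ``by definition''. The discrimination problem hands you a supremum over \emph{density operators} $\rho\in\mathcal{D}(\HH\otimes\mathbb{C}^k)$, whereas the diamond norm as defined in this paper is a supremum over the full unit ball $\mathsf{ball}\big(\s(\HH\otimes\mathbb{C}^k)\big)$, which contains non-positive and even non-Hermitian operators of trace norm at most one; your final display silently replaces the former by the latter. The inequality $p^*_{dist}\le\frac{1}{2}+\frac{1}{4}\|\PP_0-\PP_1\|_\diamond$ is indeed immediate from $\mathcal{D}(\HH\otimes\mathbb{C}^k)\subset\mathsf{ball}\big(\s(\HH\otimes\mathbb{C}^k)\big)$, but the reverse inequality --- that restricting the supremum to states loses nothing --- is a real lemma, not a definition. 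It holds here because $\Delta=\PP_0-\PP_1$ is Hermiticity-preserving: for such maps the unit-ball supremum defining $\|\Delta\|_\diamond$ is attained on pure states, which is exactly \cite[Thm.~3.51]{WatrousNotes}, the result immediately preceding the one the paper cites. (For \emph{Hermitian} $\sigma$ in the unit ball the reduction to states is an easy convexity argument via the decomposition $\sigma=\sigma_+-\sigma_-$; it is the passage from general to Hermitian $\sigma$ that genuinely uses Hermiticity-preservation, and the corresponding equality fails for generic linear maps.) Your argument is complete once this lemma is inserted at the final step.
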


This optimal probability $p^*_{dist} $ can be written explicitly as:
\beq\label{DistProb}
p^*_{dist} = \mathrm{sup} \bigg[ \frac{1}{2} \Tr \big[Q \big( \PP_0 \otimes \id_\mathcal{Z}  (\rho) \big) \big]  + \frac{1}{2}  \Tr \big[ \big(\id_{\HH_d\otimes\mathcal{Z}}-Q\big) \big(\mathcal{P}_1 \otimes \id_\mathcal{Z}  (\rho) \big) \big]  \bigg],
\eeq
where the supremum is taken over all finite dimensional Hilbert spaces $\Ck$, density operators $\rho \in \mathcal{D}(\HH_d \otimes \Ck)$ and binary POVMs $\lbrace Q,\id-Q\rbrace$. Moreover, the supremum is achieved for $\Ck\simeq\HH_d$ and $\rho$  pure. 

Another important notion in the characterization we give of UPQPs as $\varepsilon$-embedding is the completely bounded norm of a map $\Phi:\s(\HH_d)\rightarrow \BB(\HH_m)$. The most natural way to introduce this notion is by understanding the completely bounded maps (maps with finite completely bounded norm) as the natural morphisms in the category of Operator Spaces. Then, let us briefly recall the key notions about these objects.

\subsection{The completely bounded norm and operator spaces}
An operator space is a complex Banach space $E$ together with a sequence of ``reasonable'' norms in the spaces $M_k \otimes E = M_k (E)$, where $M_k(E)$ is the space of square matrices of order $k$ with entries in $E$. This turns out to be equivalent to consider $E$ as a closed subspace of some $\BB(\HH)$ via a chosen embedding, defining the norm in $M_k(E)$ as the norm inherited from the embedding, $M_k(E) \subset M_k (\BB(\HH)) \simeq  \BB(\mathbb{C}^k\otimes \HH)$. As pointed out before, the natural morphisms compatible with this additional structure are the completely bounded (c.b.) maps. Given a linear map between operator spaces $\Phi:\, E \rightarrow F $, we define its completely bounded norm as $\|\Phi\|_{cb} := \sup_k \| \id_{M_k} \otimes \Phi:\, M_k(E)\rightarrow M_k(F)\|  $. Thus, the c.b. maps are those for which $\|\Phi\|_{cb}<\infty$, and we denote them by $\mathcal{CB}(E,F)$. Additionally, we say that a map is completely contractive (c.c.) if $\|\Phi\|_{cb} \le1$. Finally, c.b. maps provide us also with the notion of duality of an operator space $E$, $E^*$. For any $k\in\mathbb{N}$ we just define $M_k(E^*) = \mathcal{CB}(E,M_k)$.

Now, $\s(\HH)$ and $\BB(\HH)$ can be endowed with an operator space structure  as follows. As depicted in the previous paragraph, there is a natural operator space structure on $\BB(\HH)$ given by the identification $M_k(\BB(\HH)) = \BB(\mathbb{C}^k \otimes \HH)$. Then, $\s(\HH)$ inherits its operator space structure from the former by the duality explained in the previous paragraph.

\subsection{Type and cotype of a Banach space}
Next, we introduce the notion of type and cotype of a Banach space. In order to make this work completely accesible for non-experts we will only introduce the basic concepts we need to develop our results. A complete study about the theory of type/cotype of Banach spaces can be found in classical texts as \cite{TomczakJaegBook}.

For us, it will be enough to consider the Rademacher notion of type and cotype, built on Rademacher random variables, that is, random variables which takes values $-1$ and $1$ with probability $1/2$. Let us denote in this section by $\lbrace \varepsilon_i \rbrace_{i=1}^n$ a set of $n$ i.i.d. such random variables. Then, $\mathbb{E}[f(\lbrace \varepsilon_i \rbrace_{i=1}^n)]$ will be the expected value of the function $f$ over any combination of signs $\lbrace{\varepsilon_i}\rbrace_{i=1}^n \in \lbrace-1,1\rbrace^n$ with uniform weight $1/2^n$.

\begin{definition}\label{typedef}Let $X$ be a Banach space and let $1 \le p \le 2$. We say $X$ is of (Rademacher) type $p$ if there exists a positive constant $\rmT$ such that for every natural number $n$ and every sequence $\lbrace x_i \rbrace_{i=1}^n \subset X$ we have
	\beq \nn \hspace{-3mm}
	\left( \mathbb{E}   \Big[ \big\|  \sum_{i=1}^n \varepsilon_i  x_i \big\|_X^2 \Big]   \right)^{1/2}
	\hspace{-1mm} \le	 \rmT \left( \sum_{i=1}^n \|x_i\|_X^p \right)^{1/p},
	\eeq
	Moreover, we define the Rademacher type $p$ constant $\mathrm{T}_p(X)$ as the infimum of the constants $\rmT$ fulfilling the previous inequality.\end{definition}

Although we will not use it in this work, for the sake of completeness we will also introduce, for a given Banach space $X$ and $2\leq q < \infty$, the Rademacher cotype $q$ constant  $\mathrm{C}_q(X)$ as the infimum of the constant $\mathrm C$ (in case they exist) such that the following inequality holds for every natural number $n$ and every sequence $\lbrace x_i \rbrace_{i=1}^n \subset X$,
\begin{align*}\mathrm{\mathrm C}^{-1}\Big( \sum_{i=1}^n \|x_i\|_X^q\Big)^{1/q} \le 
	\left( \mathbb{E}   \Big[ \big\|  \sum_{i=1}^n \varepsilon_i  x_i \big\|_X^2 \Big]   \right)^{1/2} 
\end{align*}


These are central concepts in the theory of Banach spaces, context in which  the systematic use of these notions is traced back to the  work of  J. Hoffmann-J{\o}rggensen, S. Kwapie{\'n}, B. Maurey and G. Pisier in the 1970's. 

The following proposition is straightforward from the definition of $\mathrm{T}_p(X)$.
\begin{proposition}\label{TypeProp1}$\mathrm{T}_p(X)$  is preserved by subspaces. That is, if $S$ is a subspace of $X$ (as Banach spaces), then $\mathrm{T}_p(S) \le \mathrm{T}_p(X)$.
\end{proposition}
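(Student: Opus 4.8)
The plan is to exploit the fact that the type $p$ constant is defined by a condition that is universally quantified over all finite sequences of vectors, and that passing to a subspace only shrinks the collection of sequences one must test. First I would recall that, since $S$ is a subspace of $X$ in the Banach-space sense, its norm is the restriction of the norm of $X$, so $\|s\|_S = \|s\|_X$ for every $s \in S$. In particular, for any finite family $\{s_i\}_{i=1}^n \subset S$ the two quantities entering Definition \ref{typedef} — the Rademacher average $\big(\mathbb{E}\,\|\sum_i \varepsilon_i s_i\|^2\big)^{1/2}$ and the $\ell_p$ sum $\big(\sum_i \|s_i\|^p\big)^{1/p}$ — take the same value whether the norm is read in $S$ or in $X$.

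Given this, the argument is immediate. I would fix an arbitrary admissible constant $\rmT$ for $X$, i.e.\ any $\rmT \ge \mathrm{T}_p(X)$, and an arbitrary finite sequence $\{s_i\}_{i=1}^n \subset S$. Viewing the $s_i$ as elements of $X$ and applying the type $p$ inequality for $X$, then replacing every ambient norm by the identical subspace norm, yields
\[
\left( \mathbb{E}\Big[ \big\| \textstyle\sum_{i=1}^n \varepsilon_i s_i \big\|_S^2 \Big] \right)^{1/2} \le \rmT \left( \sum_{i=1}^n \|s_i\|_S^p \right)^{1/p}.
\]
Since $\{s_i\}$ was arbitrary, the same $\rmT$ is admissible for $S$ as well, and taking the infimum over all such $\rmT$ gives $\mathrm{T}_p(S) \le \mathrm{T}_p(X)$.

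There is no genuine obstacle to this proposition: its entire content is captured by the observation that both sides of the defining inequality depend on the chosen sequence only through the common norm, so any constant witnessing type $p$ for $X$ automatically witnesses it for every subspace. The reasoning is purely formal, uses no structure of $X$ beyond its norm, and is insensitive to the exponent, hence holds verbatim for all $1 \le p \le 2$.
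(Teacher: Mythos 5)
Your proof is correct and is exactly the argument the paper has in mind: the paper states this proposition as ``straightforward from the definition'' without writing out details, and your observation that the defining inequality for a sequence in $S$ involves only norms that agree with those of $X$ is precisely that routine verification. Nothing further is needed.
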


Although the proof of the following result is also very easy, we include it here for the sake of completeness.
\begin{proposition}\label{TypeProp2}
	Given a linear isomorphism between two Banach spaces $X$ and $Y$, $\Phi: X \rightarrow Y$, the following relation between type constants holds:
	\beq\label{Reltypeconsts}
	\rmT_p(X) \le \|\Phi\| \|\Phi^{-1}\| \rmT_p(Y).
	\eeq 
\end{proposition}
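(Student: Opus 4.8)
The plan is to establish \eqref{Reltypeconsts} directly from Definition \ref{typedef}, using nothing beyond the triangle inequality together with the operator-norm bounds $\|\Phi\|$ and $\|\Phi^{-1}\|$. The strategy is to take an arbitrary finite test sequence in $X$, transport it to $Y$ through $\Phi$, invoke the type-$p$ inequality that $Y$ already satisfies, and then transport back, charging exactly one factor of each operator norm along the way.

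Concretely, I would fix a natural number $n$ and an arbitrary sequence $\{x_i\}_{i=1}^n \subset X$, and set $y_i := \Phi(x_i) \in Y$, so that $x_i = \Phi^{-1}(y_i)$ because $\Phi$ is an isomorphism. The first step controls the Rademacher average on the $X$ side by the one on the $Y$ side. For every fixed choice of signs $\{\varepsilon_i\}$ one has $\sum_i \varepsilon_i x_i = \Phi^{-1}\!\big(\sum_i \varepsilon_i y_i\big)$, whence
\[ \big\|\textstyle\sum_i \varepsilon_i x_i\big\|_X \le \|\Phi^{-1}\|\, \big\|\textstyle\sum_i \varepsilon_i y_i\big\|_Y . \]
Squaring, averaging over the signs, and taking square roots preserves the inequality, yielding
\[ \Big( \mathbb{E}\,\big\|\textstyle\sum_i \varepsilon_i x_i\big\|_X^2 \Big)^{1/2} \le \|\Phi^{-1}\|\, \Big( \mathbb{E}\,\big\|\textstyle\sum_i \varepsilon_i y_i\big\|_Y^2 \Big)^{1/2} . \]

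Next I would apply the defining inequality of $\rmT_p(Y)$ to the sequence $\{y_i\}$, bounding the right-hand average by $\rmT_p(Y)\big(\sum_i \|y_i\|_Y^p\big)^{1/p}$. The final step charges the remaining operator norm: from $\|y_i\|_Y = \|\Phi(x_i)\|_Y \le \|\Phi\|\,\|x_i\|_X$ one obtains $\big(\sum_i \|y_i\|_Y^p\big)^{1/p} \le \|\Phi\|\,\big(\sum_i \|x_i\|_X^p\big)^{1/p}$. Chaining the three estimates shows that $\|\Phi\|\,\|\Phi^{-1}\|\,\rmT_p(Y)$ is an admissible type-$p$ constant for $\{x_i\}$; since the sequence was arbitrary, taking the infimum over admissible constants in Definition \ref{typedef} gives \eqref{Reltypeconsts}.

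There is no genuine obstacle here, as the argument is purely formal. The only point requiring a little care is the bookkeeping of which operator norm attaches to which side: $\|\Phi^{-1}\|$ must dominate the Rademacher average (where $\Phi^{-1}$ pulls the random sum back to $X$), while $\|\Phi\|$ governs the $\ell_p$-sum of norms (where $\Phi$ enlarges each $\|x_i\|_X$). I would also note that specialising to a surjective isometry, so that both operator norms equal $1$, recovers the invariance of $\rmT_p$ under isometric isomorphism, while the same mechanism is exactly what produces the distortion factor $(1-\varepsilon)^{-1/2}$ when the bound is applied to the $\varepsilon$-embedding of Theorem \ref{props1} in the proof of Theorem \ref{ths1}.
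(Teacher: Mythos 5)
Your proof is correct and follows essentially the same route as the paper's: transport the sequence to $Y$ via $\Phi$, bound the Rademacher average by pulling back through $\Phi^{-1}$ (picking up $\|\Phi^{-1}\|$), apply the type-$p$ inequality of $Y$, and push the $\ell_p$-sum of norms back through $\Phi$ (picking up $\|\Phi\|$), then take the infimum. The bookkeeping of which operator norm attaches to which step matches the paper's chain of inequalities exactly.
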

\begin{proof}
	Let us assume that $Y$ has type $p$ constant $\mathrm{T}_p(Y)$. Then, for any $n$ and any family $\lbrace x_i \rbrace_{i=1}^n \subset X$ we note that, since $\Phi$ is an isomophism, for any $i$: $x_i=\Phi^{-1}(y_i)$ for some $y_i\in Y$. Then, 
	\begin{align*}\nn
		&\left( \mathbb{E}   \Big[ \big\|  \sum_{i=1}^n \varepsilon_i  x_i \big\|_X^2 \Big]   \right)^{1/2}
		=\left( \mathbb{E}   \Big[ \big\|  \sum_{i=1}^n \varepsilon_i  \Phi^{-1}(y_i) \big\|_X^2 \Big]   \right)^{1/2} 
		\leq  \|\Phi^{-1}\|\left( \mathbb{E}   \Big[ \big\|  \sum_{i=1}^n \varepsilon_i  y_i \big\|_X^2 \Big]   \right)^{1/2} \\
		&\le \|\Phi^{-1}\| \mathrm{T}_p(Y) \Big( \sum_{i=1}^n \|y_i\|_Y^p\Big)^{1/p}
		=\|\Phi^{-1}\| \mathrm{T}_p(Y) \Big( \sum_{i=1}^n \|\Phi(x_i)\|_Y^p\Big)^{1/p} \le \|\Phi\| \|\Phi^{-1}\|    \mathrm{T}_p(Y) \Big( \sum_{i=1}^n \|x_i\|_X^p\Big)^{1/p}.
	\end{align*}
	Since $\rmT_p(X)$ is by definition the smallest constant satisfying the inequality above, the stated inequality must hold and we conclude our proof.
\end{proof}

Finally, we will show how the estimates used in the main text:
\beq\label{typeconsts}
(\dim \HH)^{1/2} \le \mathrm{T}_2(S_1(\HH)) ,\quad \mathrm{T}_2(\BB(\HH)) \le  \big(C \log (\dim\HH)\big)^{1/2}, \text{   }\text{ where }C\text{ is a constant},
\eeq
can be obtained.
\begin{proof}
	
	In order to show the first estimate, let us consider the elements (diagonal matrices) $x_i=|i\rangle\langle i|$ in $\s(\HH)$  for every $i=1,\cdots, \dim\HH$. Then, the bound can be easily obtained by considering the family $\lbrace x_i \rbrace_{i=1}^{\dim\HH}$ in Definition \ref{typedef}. 
	
	For the second estimate we begin recalling that for any $p \ge 2$, the Banach-Mazur distance between  the $p$-th Schatten class of operators on $\HH$, $\mathcal{S}_p(\HH)$, and $\mathcal{S}_\infty(\HH) \simeq \BB(\HH)$ is equal to $(\dim\HH)^{1/p}$\cite[Theorem 45.2]{TomczakJaegBook}. In particular, there exists a linear isomorphism $\Phi: \BB(\HH)\rightarrow\mathcal{S}_p(\HH)$ such that $\|\Phi\| \|\Phi^{-1}\| = (\dim \HH)^{1/p}$. In addition, it is also known that for $2\le p < \infty$, $ \rmT_2(\mathcal{S}_p(\HH)) \le \sqrt{p}$ \cite{Tomczak74}. Therefore, according to Proposition \ref{TypeProp2}, for every $2\leq p<\infty$ we obtain that
	$$
	\rmT_2(\BB(\HH)) \le (\dim \HH)^{1/p} \rmT_2(\mathcal{S}_{p}(\HH)) \le   (\dim \HH)^{1/p} \sqrt{p}.
	$$
	Considering $p=  \log_{\sqrt2} (\dim\HH) $  we obtain the second relation in \eqref{typeconsts}, with $C = 4$.  
\end{proof}


\section{Proofs}
\subsection{Theorem  \ref{props1}}\label{SecProofs1}

In this section we will show that a unitary quantum channel $\PP(\cdot) = \VV ( \cdot ) \VV^\dagger \in \mathrm{CPTP}(\HH_d\otimes \HH_m)$ implementing a $\epsUPQP$ induces a complete contraction $\Phi_\VV :\ \s(\HH_d) \hookrightarrow \BB(\HH_m)$ such that
\begin{align*}
	\nn\|\sigma\|_{S_1(\HH_d)} \ge \|\Phi_{\mathcal{V}}(\sigma)\|_{\BB(\HH_m)}  \ge (1-\varepsilon)^{1/2} \|\sigma\|_{S_1(\HH_d)}
\end{align*}for every $\sigma\in S_1(\HH_d)$.

\begin{proof}
	Given a unitary channel $\PP(\cdot) = \VV ( \cdot ) \VV^\dagger$, we consider the map 	$\Phi_\VV: \  \s(\HH_d) \longrightarrow \BB(\HH_m)$ defined as $$ \Phi_\VV(\cdot) := \Tr_{\HH_d} \VV (\:\cdot^{\rmT} \: \otimes \id_{\HH_m}),$$where $\VV\in  \BB(\HH_d\otimes\HH_m)$ is the unitary corresponding to $\PP$. As commented in the main text, in this case the completely bounded norm of $\Phi_\VV: \  \s(\HH_d) \longrightarrow \BB(\HH_m)$ coincides with $\|\VV\|_{\BB(\HH_d\otimes\HH_m)}=1$.\footnote{ Recall the completely isometric character of the correspondence $\BB(\HH_d\otimes\HH_m) \simeq \mathcal{CB}\left(\s(\HH_d),\BB(\HH_m)\right)$} Thus, $\Phi_\VV$ is completely contractive. In addition, since $$\|\Phi_\VV: \  \s(\HH_d) \rightarrow \BB(\HH_m)\|\leq \|\Phi_\VV: \  \s(\HH_d) \rightarrow \BB(\HH_m)\|_{cb}=1,$$ we immediately deduce that $ \|\Phi_{\mathcal{V}}(\sigma)\|_{\BB(\HH_m)}\leq \nn\|\sigma\|_{S_1(\HH_d)}$ for every $\sigma\in S_1(\HH_d)$.
	
	
	For the second inequality in \eqref{ineqs}, we elaborate on the norm:
	\beq\nn
	\|\Phi_{\VV}(\sigma)\|_{\BB(\HH_m)}  =  \sup   \| \mathrm{Tr}_{\HH_d}\left[ \VV (\sigma^\rmT \otimes \id_{\HH_m})\right] |\xi\ra\|_{\HH_m},
	\eeq
	where the supremum is taken over unit vectors $|\xi\ra \in \HH_m$. Now, we consider the singular value decomposition of  $\sigma^\rmT = \sum_i \mu_i |\psi_i\ra\la\gamma_i|$ with $(|\psi_i\ra)_{i=1}^d$, $(|\gamma_i\ra)_{i=1}^d$ orthonormal bases of $\HH_d$. Therefore $|\gamma_i\ra = U |\psi_i\ra$ for some unitary $U$. Furthermore, we can take $\mu_i\ge 0$ and then $\sum_i \mu_i = \|\sigma^\rmT \|_{S_1(\HH_d)} =\|\sigma\|_{S_1(\HH_d)}$, which can be restricted to one and the general case will follow by homogeneity. Besides, it is convenient to express $\sigma^\rmT$ as
	\begin{align}\nn
		\sigma^\rmT = \sum_i \mu_i |\psi_i\ra\la\psi_i| U^\dagger = \Tr_\ZZ	\big(\sum_i \sqrt{\mu_i}  |i\ra_\ZZ |\psi_i\ra \big)	\big(\sum_j  \sqrt{\mu_j} \la j |_\ZZ  \la \psi_j | U^\dagger \big)\equiv  \Tr_\ZZ  |\psi\ra\la\gamma |,
	\end{align}
	where we have considered a new auxiliary Hilbert space $\ZZ$ and $|\gamma\ra := (U\otimes\id_\ZZ)|\psi\ra$. Now, we are ready to rewrite
	\begin{align} 
		\|\Phi_{\VV}(\sigma)\|_{\BB(\HH_m)}  &=  \sup_{|\xi\ra \in \mathsf{ball} (\HH_m)}   \big\| \mathrm{Tr}_{\HH_d\otimes\ZZ}\left[( \VV \otimes \id_\ZZ) (|\psi\ra\la\gamma | \otimes \id_{\HH_m})\right] |\xi\ra \big\|_{\HH_m} \nn\\
		&=\sup_{|\xi\ra  \in \mathsf{ball} (\HH_m)} 
		\Big[ \Tr \big[ (|\gamma\ra \la \gamma| \otimes \id_{\HH_m})    (\VV \otimes \id_\ZZ )( |\psi\ra\la\psi|\otimes  |\xi\ra\la\xi| )  (\VV^\dagger \otimes \id_\ZZ )  \big]    \Big]^{\frac{1}{2}}
		\nn\\
		&\ge 
		\Big[ \Tr \big[ (|\gamma\ra \la \gamma| \otimes \id_{\HH_m})   (\VV \otimes \id_\ZZ )( |\psi\ra\la\psi|\otimes  |\xi_U\ra\la\xi_U| )  (\VV^\dagger \otimes \id_\ZZ )  \big]    \Big]^{\frac{1}{2}},\label{proof1}
	\end{align}
	where $|\xi_U\ra$ is the state associated to $U$ in the definition of $\epsUPQP$ (Definition 2 in the main text).

	At this point, we appeal to the operational interpretation of the distance induced by the diamond norm given by Theorem \ref{Thdiamond}. In fact, it turns out that \eqref{proof1} can be understood in terms of the optimal probability of distinguishing  the channel $\Tr_{\HH_m} \VV (\cdot \otimes |\xi_U\ra\la\xi_U|)\VV^\dagger$ and the ideal  channel $U ( \cdot ) U^\dagger$, $p^*_{dist}$. We claim  that $$\|\Phi_{\VV}(\sigma)\|_{\BB(\HH_m)}  \ge \sqrt{2} (1 - p^*_{dist})^{1/2}.$$ With this estimate at hand, we can easily finish our proof since, according to Theorem \ref{Thdiamond}, we obtain 
	\beq
	\|\Phi_{\VV}(\sigma)\|_{\BB(\HH_m)}  \ge 
	\left( 1-\frac{1}{2}	\left\| \Tr_{\HH_m} \VV (\cdot \otimes |\xi_U\ra\la\xi_U|)\VV^\dagger -  U (\cdot) U^{\dagger}\right\|_{\diamond}\right)^{\tfrac{1}{2}} \ge \left( 1-\varepsilon	\right)^{\frac{1}{2}}.
	\eeq
	
	To finish our proof, let us show our claim. To this end, according to \eqref{DistProb} applied to the channels $\Tr_{\HH_m} \VV (\cdot \otimes |\xi_U\ra\la\xi_U|)\VV^\dagger$ and $U \, \cdot \, U^\dagger$, we can write
	\begin{align}
		2 (1 - p^*_{dist}) &=  2 - 2\sup_{\rho,Q} \bigg[ \frac{1}{2} \Tr \big[Q \big( \PP \otimes \id_Z  (\rho) \big) \big]  + \frac{1}{2}  \Tr \big[ \big(\id_{\HH_d\otimes\mathcal{Z}}-Q\big) \big((U \otimes \id_\ZZ)  (\rho) U^\dagger \otimes \id_\ZZ \big) \big]  \bigg] \nn\\
		&\le 2 - \Tr \Big[|\gamma\ra\la\gamma|\big(U \otimes \id_\ZZ  |\psi\ra\la\psi| U^\dagger \otimes\id_\ZZ \big) \Big]  - \Tr \Big[ \big(\id_{\HH_d\otimes\mathcal{Z}}-|\gamma\ra\la\gamma|\big)  \big( \PP_{\xi_u} \otimes \id_\mathcal{Z}  (|\psi\ra\la\psi|) \big)  \Big] \nn\\
		&=  2 - 2 + \Tr \Big[|\gamma\ra\la\gamma| \big( \PP_{\xi_U} \otimes \id_\mathcal{Z}  (|\psi\ra\la\psi|) \big) \Big] 
		= \Tr \Big[|\gamma\ra\la\gamma| \big( \PP_{\xi_U} \otimes \id_\mathcal{Z}  (|\psi\ra\la\psi|) \big) \Big] \nn ,
	\end{align}
	where we can recognize \eqref{proof1} in the last expression when $\PP_{\xi_U} (\cdot) = \Tr_{\HH_m}\big[\VV (\: \cdot\: \otimes |\xi_U\ra\la\xi_U|)\VV^\dagger\big]$.
\end{proof}

\subsection{Theorem  \ref{props2}}\label{SecProofs2}

Theorem  \ref{props1}  identifies any $\epsUPQP$, $\mathcal{P}(\cdot) = \mathcal{V} ( \cdot )  \mathcal{V}^{\dagger},\ \mathcal{V}\in \UU(\HH_d \otimes \HH_m)$,  with a $\varepsilon$-embedding which is completely contractive. That is, the previous unitary channel $\mathcal{P}$ defines a map
\beq \begin{array}{cccl}
	\Phi_{\mathcal{V}} : &  \s(\HH_d)& \longrightarrow  &\mathcal{B}(\HH_m) \\
	&	\sigma  	& \mapsto	&	
	\, \Phi_{\mathcal{V}} (\sigma)\, := \Tr_{\HH_d} \mathcal{V} (\sigma^\rmT \otimes \id_{\HH_m}), \end{array}\label{correspondence1}
\eeq which is completely contractive and also a $\varepsilon$-embedding.

Now we show the converse of this correspondence, which is the content of Theorem  \ref{props2}.

\begin{proof}
	As we have already explained, the isometric identification $\nn \BB(\HH_d\otimes\HH_m) \simeq  \mathcal{CB}\left( \s(\HH_d), \BB(\HH_m) \right)$, guarantes that a completely contractive map $\Phi :=\ \s(\HH_d) \longrightarrow  \mathcal{B}(\HH_m)$ must be of the form
	\beq\label{CBform}
	\Phi(\cdot) = \Tr_{\HH_d} \left[  T (\cdot^\rmT \otimes \id_{\HH_m}) \right],\text{ with } \|T\|_{ \BB(\HH_d\otimes\HH_m)} = \| \Phi \|_{\mathcal{CB}\left(\s(\HH_d), \mathcal{B}(\HH_m)\right)} =1.
	\eeq
	
	Hence, by the Russo-Dye Theorem, $T$ can be written as a convex combination of at most $d m$ unitaries: $T = \sum_i \lambda_i \TT_i$, where $\TT_i\in \UU(\HH_d\otimes\HH_m)$ and $\lambda_i\geq 0$ for every $i$, and $\sum_i\lambda_i = 1$. Hence, the mapping $ \Phi$ can be written as:
	$$
	\Phi (\cdot)  =   \Tr_{\HH_d} \Big[  \sum_i \lambda_i \TT_i (\cdot^{\rmT} \otimes \id_{\HH_m})  \Big],
	$$where $\lbrace \TT_i \rbrace_{i=1}^{d m} \subset \UU(\HH_d\otimes\HH_m)$, $\lambda_i \ge 0$ for every $i$ and  $\sum_i \lambda_i = 1$.

	Let us define a new unitary quantum channel $\mathcal{P} (\cdot) = \VV \,\cdot \, \VV^\dagger$ where $$\VV = \sum_i \TT_i \otimes |i\ra\la i|\in  \UU(\HH_d^{\otimes2}\otimes \HH_m^{\otimes2})\qquad \text{ ( recall that }  \TT_i \in \UU(\HH_d\otimes\HH_m) \text{ )}.$$ Now, given an element $\sigma^\rmT = \Tr_\ZZ |\psi\ra\la\gamma| \in\s(\HH_d)$, where $\ZZ$ is an arbitrary finite dimensional Hilbert space, we relate the norm of the image of this element by $\Phi$ with the fidelity between $|\gamma\ra\la \gamma|$ and the state resulting from the action of $\PP\otimes\id_\ZZ$ on $|\psi\ra\la\psi|$. We shorten the notation for this fidelity as:
	\beq\nn
	\mathcal{F}_\PP \big[|\gamma\ra;|\psi\ra\otimes|\xi\ra\big] = \mathcal{F} \bigg[  \Tr_{\HH_d \otimes \HH_m^{\otimes 2}}\Big[\PP \otimes \id_\ZZ  \Big( |\psi\ra\la\psi| \otimes |\xi\ra\la\xi|  \otimes \sum_{i,j=1}^{dm} \sqrt{\lambda_i \lambda_j} | i \ra\la j| \Big)  \Big] ,|\gamma\ra\la\gamma|\bigg] ,
	\eeq
	for any $|\xi\ra \in \mathsf{ball} (\HH_m)$. 
	
	Using the well known fact $\mathcal{F}(\rho, |\gamma\rangle\langle\gamma|)=\big(\Tr[\rho |\gamma\rangle\langle\gamma|]\big)^{\frac{1}{2}}$, we can write:
	\begin{align*} \label{prop2eq1} 
		\mathcal{F}_\PP \big[|\gamma\ra;|\psi\ra\otimes|\xi\ra\big] & =
		\Bigg(\Tr \bigg[ \Big(|\gamma\ra\la\gamma|\otimes \id_{\HH_d \otimes \HH_m^{\otimes 2}}  \Big) \		\big(\PP \otimes \id_\ZZ\big) \Big(|\psi\ra\la\psi| \otimes |\xi\ra\la\xi| \otimes \sum_{i,j} \sqrt{\lambda_i \lambda_j} | i \ra\la j|\Big)\bigg]\Bigg)^{\frac{1}{2}}\nn.\end{align*}
	
	Then, we can replace the second identity on $\HH_m$ by the projector $\sum_{k,l} \sqrt{\lambda_k \lambda_l} | k \ra\la l|$ to conclude that the previous quantity is greater than or equal to
	\begin{align*}
		&\ge \Bigg(\Tr \bigg[ \Big(|\gamma\ra\la\gamma| \otimes \id_{\HH_m} \otimes \sum_{k,l} \sqrt{\lambda_k \lambda_l} | k \ra\la l| \Big) \nn\big(\PP \otimes \id_\ZZ\big) \Big(|\psi\ra\la\psi| \otimes |\xi\ra\la\xi| \otimes \sum_{i,j} \sqrt{\lambda_i \lambda_j} | i \ra\la j|\Big)\bigg] \Bigg)^{\frac{1}{2}} \nn\\&
		= \Bigg( \sum_{i,j} \lambda_i \lambda_j \Tr \bigg[ \big(|\gamma\ra\la\gamma| \otimes  \id_{\HH_m} \big) \nn (\TT_i \otimes \id_\ZZ ) \big(  |\psi\ra\la\psi| \otimes |\xi\ra\la\xi|   \big)  (\TT_j^\dagger \otimes \id_\ZZ)  \bigg] \Bigg)^{\frac{1}{2}} \nn\\&=	\Big\| \sum_i \lambda_i \la \gamma |  \TT_i \otimes \id_\ZZ  |\psi\ra |\xi\ra  \Big\|_{\HH_m}
		\nn\\&= \Big\| \sum_i \lambda_i\Tr_{\HH_d} \big[ \TT_i ( \Tr_\ZZ[|\psi\ra\la\gamma|]  \otimes  \id_{\HH_m} )\big] |\xi\ra  \Big\|_{\HH_m}   =  \big\|  \Phi \big( \sigma^\rmT \big) |\xi\ra  \big\|_{\HH_m}.
	\end{align*}
	Hence, we have shown that 
	\begin{equation}\label{prop2eq1} 
		\mathcal{F}_\PP \big[|\gamma\ra;|\psi\ra\otimes|\xi\ra\big] \geq  \big\|  \Phi \big( \sigma^\rmT \big) |\xi\ra  \big\|_{\HH_m} .
	\end{equation}
	
	At this point, the  relation proven in \eqref{prop2eq1} allows us to connect the previous calculations with the diamond norm, by means of the Fuchs-van de Graaf inequalities \cite[Theorem 1]{FuchsvandeGraaf99}:
	\begin{align}\label{prop2eq2}
		\frac{1}{2}\Big\|  \Tr_{\HH_d \otimes \HH_m^{\otimes 2}} \Big[  \PP  \big(  \cdot \otimes |\xi\ra\la\xi| \otimes \sum_{i,j} \sqrt{\lambda_i \lambda_j} |i \ra\la j| \big) \Big]  - U ( \cdot) U^\dagger )\Big\|_{\diamond} & \le \sup_{\begin{subarray}{c}
		|\psi\ra\in\mathsf{ball}(\HH_d \otimes \ZZ)	 \\ \ZZ	\end{subarray} } \sqrt{1-\mathcal{F}_\PP \big[\id_\ZZ\otimes U|\psi\ra;|\psi\ra\otimes|\xi\ra\big]^2} \nn\\
		&\le \sup_{\begin{subarray}{c}
			|\psi\ra\in\mathsf{ball}(\HH_d \otimes \ZZ)	\\ \ZZ	\end{subarray} } \sqrt{1-\bigg\|  \Phi \Big( \big(\Tr_\ZZ (|\psi\ra\la \psi|) U^\dagger\big)^\rmT \Big) |\xi\ra  \bigg\|_{\HH_m}^2} .\nn
	\end{align}
	
	In order to finish the proof, we need to show that we can choose a unit vector $|\xi_U\ra\in\HH_d$ such that $\|\Phi(\sigma^\rmT) |\xi_U\ra \|_{\HH_m} \ge (1-\delta)$ and, in addition, $|\xi_U\ra$ does not depend on the positive part of $\sigma^\rmT$. Let us be more precise:
	\begin{fact}\label{independence of positive part}
		For every unitary $U\in \BB(\HH_d)$ there exists a unit element $A_U\in \s(\HH_m)$ such that for every positive element $\rho\in \s(\HH_d)$ we have $$\Tr[A_U \Phi(\rho U^\dag)]\geq (1-\delta)\|\rho U^\dag\|_{\s(\HH_d)}=(1-\delta) \|\rho\|_{\s(\HH_d)},$$where $\delta$ is the one in the statement of Theorem  \ref{props2}. 
	\end{fact}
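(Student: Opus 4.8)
The essential difficulty is that the element $A_U$ must be produced \emph{uniformly}, i.e.\ independently of $\rho$. For a single fixed positive $\rho$ the claim is immediate: by the duality $\BB(\HH_m)=\s(\HH_m)^*$ and the $\delta$-embedding lower bound of Theorem \ref{props2}, $\|\Phi(\rho U^\dagger)\|_{\BB(\HH_m)}\ge(1-\delta)\|\rho U^\dagger\|_{\s(\HH_d)}$, and a norming functional attains this value; the whole content of the Fact is that one functional can serve all positive $\rho$ at once. First I would reduce to density operators. Since the trace norm is unitarily invariant one has $\|\rho U^\dagger\|_{\s(\HH_d)}=\|\rho\|_{\s(\HH_d)}=\Tr\rho$, and both sides of the target inequality are positively homogeneous of degree one in $\rho$ on the positive cone. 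Hence it suffices to exhibit a unit element $A_U\in\mathsf{ball}(\s(\HH_m))$ with $\Re\Tr[A_U\,\Phi(\rho U^\dagger)]\ge(1-\delta)$ for every $\rho\in\mathcal D(\HH_d)$, the general positive case then following by rescaling $\rho$ to unit trace.

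The plan is to obtain such an $A_U$ by a minimax argument applied to the bilinear quantity $\Re\Tr[A\,\Phi(\rho U^\dagger)]$ in the two variables $A\in\mathsf{ball}(\s(\HH_m))$ and $\rho\in\mathcal D(\HH_d)$. Both of these sets are convex and, in finite dimension, compact, and the expression is affine in each argument separately with the other fixed. Von Neumann's (equivalently Sion's) minimax theorem then applies and gives
\[
\max_{A}\ \min_{\rho}\ \Re\Tr\big[A\,\Phi(\rho U^\dagger)\big]
=\min_{\rho}\ \max_{A}\ \Re\Tr\big[A\,\Phi(\rho U^\dagger)\big].
\]
This interchange is precisely the device that forces the chosen functional to be independent of $\rho$, and it is the only genuinely non-trivial ingredient of the proof.

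It remains to bound the right-hand side from below. For each fixed $\rho$, the duality $\BB(\HH_m)=\s(\HH_m)^*$ identifies the inner maximum $\max_{A}\Re\Tr[A\,\Phi(\rho U^\dagger)]$ with $\|\Phi(\rho U^\dagger)\|_{\BB(\HH_m)}$, which by the embedding hypothesis is at least $(1-\delta)\|\rho U^\dagger\|_{\s(\HH_d)}=(1-\delta)$ whenever $\rho\in\mathcal D(\HH_d)$. Therefore the common minimax value is $\ge(1-\delta)$, and any maximiser $A_U$ of the left-hand side satisfies $\Re\Tr[A_U\,\Phi(\rho U^\dagger)]\ge(1-\delta)$ \emph{simultaneously} for all density operators $\rho$. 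Finally, the function $A\mapsto\min_{\rho}\Re\Tr[A\,\Phi(\rho U^\dagger)]$ is concave and positively homogeneous of degree one; since its maximum over the unit ball is strictly positive, the maximiser must lie on the sphere, so $A_U$ can be taken with $\|A_U\|_{\s(\HH_m)}=1$, a genuine unit element. Extending from $\mathcal D(\HH_d)$ to the whole positive cone by homogeneity yields $\Re\Tr[A_U\,\Phi(\rho U^\dagger)]\ge(1-\delta)\Tr\rho=(1-\delta)\|\rho\|_{\s(\HH_d)}$, which is the stated inequality (read, as is all that is used in the sequel, for the real part of the trace).
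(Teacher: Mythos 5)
Your proof is correct, and it takes a genuinely different route from the paper's. The paper argues constructively: it observes that the single norm-one functional $u(\cdot)=\Tr[U\,\cdot\,]$ norms the \emph{whole} cone at once, since $u(\rho U^\dagger)=\Tr[U\rho U^\dagger]=\Tr[\rho]=\|\rho U^\dagger\|_{\s(\HH_d)}$ for every positive $\rho$; it then pulls $u$ back through the embedding to $a_U:=u\circ\Phi^{-1}$ defined on $\Phi(\s(\HH_d))$, with $\|a_U\|\le \|u\|\,\|\Phi^{-1}\|\le 1/(1-\delta)$, extends $a_U$ to all of $\BB(\HH_m)$ by the Hahn--Banach theorem, identifies the extension with an element $B_U\in\s(\HH_m)$ by trace duality, and sets $A_U:=B_U^{\rmT}/\|B_U\|_{\s(\HH_m)}$. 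There the uniformity in $\rho$ is carried entirely by the choice of $u$, Hahn--Banach does the work that compactness does for you, and the conclusion is slightly stronger: $\Tr[A_U\Phi(\rho U^\dagger)]$ is \emph{exactly} the nonnegative real number $\Tr[\rho]/\|B_U\|_{\s(\HH_m)}$, not merely a complex number with large real part. Your minimax argument instead manufactures the uniform functional abstractly, using only the pointwise duality bound $\sup_{\|A\|_{\s(\HH_m)}\le 1}\Re\Tr[A\,\Phi(\rho U^\dagger)]=\|\Phi(\rho U^\dagger)\|_{\BB(\HH_m)}\ge 1-\delta$ on states, together with convexity and compactness of $\mathcal{D}(\HH_d)$ and of the trace-norm ball; this is more generic (nothing is used about the set $\lbrace\rho U^\dagger:\rho\ge 0\rbrace$ beyond convexity, compactness of its trace-one slice, and the unit trace norm of its elements, so the argument would transfer to other convex families), at the price of invoking Sion/von Neumann and of controlling only the real part. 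That weakening is harmless: as you anticipate, the Fact is used only in the subsequent purification step of the proof of Theorem \ref{props2}, where the relevant chain becomes $(1-\delta)\|\rho U^\dagger\|_{\s(\HH_d)}\le\Re\la\chi_U|\bar\Phi(\rho U^\dagger)|\xi_U\ra\le\|\bar\Phi(\rho U^\dagger)|\xi_U\ra\|$, so only the real part enters. Note finally that both arguments implicitly assume $\delta<1$ --- you need it to push the maximiser out to the unit sphere, and the paper needs it for $\Phi^{-1}$ to exist with $\|\Phi^{-1}\|\le 1/(1-\delta)$ --- which is in any case the only regime where the statement has content.
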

	\begin{proof} (Of Fact \ref{independence of positive part}.)
		
		Indeed, given a unitary $U\in \BB(\HH_d)$, the duality $\s(\HH_d)^*=\BB(\HH_d)$ allows us to identify $U$  with a linear map 
		$\s(\HH_d)^* \simeq \BB(\HH_d)$, 
		$$\begin{array}{cccl} u: & \s(\HH_d) &\longrightarrow& \mathbb{C}\\
		&  \sigma &  \mapsto  &  u(\sigma) := \la U^\rmT , \sigma \ra = \Tr[U \sigma] \end{array}
		$$
		with norm one. Moreover, by looking at the isomorphism  $\Phi =\s(\HH_d) \rightarrow  \Phi(\s(\HH_d))\subset\mathcal{B}(\HH_m)$, we can also define the linear map $a_U:=u\circ \Phi^{-1}: \Phi(\s(\HH_d))\rightarrow \mathbb C$. According to the properties of  $\Phi^{-1}$ it is easy to deduce that $$\|a_U\|\leq \|u\|\|\Phi^{-1}\|\leq \frac{1}{1-\delta}.$$Then, we can invoke Hahn-Banch theorem to conclude the existence of a linear map $\bar{a}_U:\mathcal{B}(\HH_m)\rightarrow \mathbb C$ with $\|\bar{a}_U\|=\|a_U\|$ and such that the map $\bar{a}_U$ is an extension of $a_U$. This can be summarized in the following diagram: 
		
		\[\begin{tikzcd}
		\s(\HH_d)  \arrow[r, "\Phi"]  \arrow[d, "u",swap,start anchor={[yshift=.25ex]}] & \Phi(\s(\HH_d))  \arrow[dl,"a"{name=a_U, below},,start anchor={[xshift=1ex]}] \hspace{-1cm} &  \subset \BB(\HH_m) \arrow[dll,"\bar a_U"{name=ta},yshift=-0.5ex,xshift=0.3ex,start anchor={[xshift=3ex,yshift=.5ex]}]\\
		\mathbb{C}					&  &  & 
		\end{tikzcd}\]
		
		Invoking again the duality $\s(\HH_d)^*=\BB(\HH_d)$ we know that the linear map $\bar{a}_U$ can be identified with an element $B_U\in \s(\HH_m)$ with norm $$\|B_U\|_{\s(\HH_m)}=\|\bar{a}_U:\mathcal{B}(\HH_m)\rightarrow \mathbb C\|\leq \frac{1}{1-\delta}.$$By defining $A_U:=B_U^T/\|B_U\|_{\s(\HH_m)}$ we obtain a unit element in $\s(\HH_m)$ such that
		\begin{align*}\Tr[A_U \Phi(\rho U^\dag)]&=\frac{1}{\|B_U\|_{\s(\HH_m)}}\Tr[B_U^T \Phi(\rho U^\dag)]=\frac{1}{\|B_U\|_{\s(\HH_m)}}\langle B_U, \Phi(\rho U^\dag)\rangle\\&=\frac{1}{\|B_U\|_{\s(\HH_m)}}\bar{a}_U(\Phi(\rho U^\dag))=\frac{1}{\|B_U\|_{\s(\HH_m)}}a_U(\Phi(\rho U^\dag))=\frac{1}{\|B_U\|_{\s(\HH_m)}}u(\rho U^\dag)\\&=\frac{1}{\|B_U\|_{\s(\HH_m)}}\langle U^T,\rho U^\dag\rangle\geq (1-\delta)\Tr[U\rho U^\dag]=(1-\delta)\Tr(\rho)= (1-\delta)\|\rho\|_{\s(\HH_d)}.
		\end{align*}So we have proven Fact \ref{independence of positive part}.
	\end{proof}

	The last obstruction to conclude the proof of Theorem  \ref{props2} is that $A$ is not a pure state (it is not even positive). Then, we must purify it to obtain a unit vector $|\xi_U\ra$ with the desired properties. This can be done by modifying the mapping $\Phi$ and considering the new one:
	$$\begin{array}{cccc}
	\bar \Phi : & \s(\HH_d) & \longrightarrow & \BB(\HH_m^{\otimes 2})\\
	& \sigma  &	\mapsto			&	\Phi(\sigma) \otimes \id_{\HH_m}.
	\end{array}$$

	Indeed, let us assume that $A_U= \sum \mu_i |\alpha_i\ra\la \beta_i |\in \s(\HH_m)$ and  consider the unit vectors $|\xi_U\ra = \sum_i \sqrt{\mu_i} |\alpha_i \ra |i\ra$ and $|\chi_U\ra = \sum_i \sqrt{\mu_i} |\beta_i \ra |i\ra$ in $\HH_m^{\otimes 2}$ so that $\Tr_{\HH_m^{(2nd)}} |\xi_U\ra\la\chi_U| = A_U $.
	Using Fact \ref{independence of positive part} we obtain that 
	\begin{align*}
		\|\rho U^\dagger \|_{\s(\HH_d)}
		& \leq \frac{1}{1-\delta}\Tr[A_U \Phi(\rho U^\dag)]=\frac{1}{1-\delta}\Tr[ |\xi_U\ra\la\chi_U| \bar \Phi (\rho U^\dag)]\leq \frac{1}{1-\delta}\|\bar \Phi (\rho U^\dag)|\xi_U\ra\|_{\HH_m^{\otimes 2}}.\end{align*}
	
	Summing up, we have proven the following
	\begin{fact}\label{fact4}
		For each $U\in\UU(\HH_d)$, there exists a unit vector $| \xi_U \ra \in \HH_m^{\otimes 2}$ such that
		\begin{align*}
			\| \bar \Phi(\rho U^\dagger) |\xi_U\ra \|_{\HH_m^{\otimes 2}} \ge (1-\delta) \| \rho U^\dagger \|_{\s(\HH_d)}, \, \, \text{  for every positive element }\,  \rho \in \s(\HH_d).
		\end{align*}
	\end{fact}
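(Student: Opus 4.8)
The plan is to derive Fact \ref{fact4} directly from Fact \ref{independence of positive part} together with the purification construction just introduced. Fact \ref{independence of positive part} already supplies, for each unitary $U\in\UU(\HH_d)$, a unit element $A_U\in\s(\HH_m)$ with $\Tr[A_U\Phi(\rho U^\dagger)]\ge(1-\delta)\|\rho U^\dagger\|_{\s(\HH_d)}$ for every positive $\rho$. The only gap to the statement of Fact \ref{fact4} is that $A_U$ is a generic trace-class operator produced by Hahn--Banach, whereas the definition of a $\epsUPQP$ demands an honest \emph{pure} program state $|\xi_U\ra$. The main obstacle is thus purely structural: converting the abstract dual functional $A_U$ into a unit vector without losing the lower bound. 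I would resolve this by dilating the memory with one extra copy of $\HH_m$, which is exactly the role played by $\bar\Phi(\sigma)=\Phi(\sigma)\otimes\id_{\HH_m}$.

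Concretely, I would take a singular value decomposition $A_U=\sum_i\mu_i|\alpha_i\ra\la\beta_i|$ with $\mu_i\ge0$ and orthonormal systems $(|\alpha_i\ra)$, $(|\beta_i\ra)$ in $\HH_m$. Since $\|A_U\|_{\s(\HH_m)}=\sum_i\mu_i=1$, the vectors $|\xi_U\ra=\sum_i\sqrt{\mu_i}|\alpha_i\ra|i\ra$ and $|\chi_U\ra=\sum_i\sqrt{\mu_i}|\beta_i\ra|i\ra$ are unit vectors in $\HH_m^{\otimes2}$, and by orthogonality of the $|i\ra$ they satisfy $\Tr_{\HH_m^{(2nd)}}|\xi_U\ra\la\chi_U|=A_U$. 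These are the standard left/right purifications of $A_U$, and $|\xi_U\ra$ is the candidate program vector.

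Finally I would transport the pairing against $A_U$ to a bilinear form on $\HH_m^{\otimes2}$. Using $A_U=\Tr_{\HH_m^{(2nd)}}|\xi_U\ra\la\chi_U|$ and the elementary identity $\Tr[(\Tr_{\HH_m^{(2nd)}}X)\,Y]=\Tr[X\,(Y\otimes\id_{\HH_m})]$, one gets
\[
\Tr[A_U\Phi(\rho U^\dagger)]=\Tr\big[|\xi_U\ra\la\chi_U|\,\bar\Phi(\rho U^\dagger)\big]=\la\chi_U|\,\bar\Phi(\rho U^\dagger)\,|\xi_U\ra.
\]
Cauchy--Schwarz, together with the fact that $|\chi_U\ra$ is a unit vector, then bounds this quantity by $\|\bar\Phi(\rho U^\dagger)|\xi_U\ra\|_{\HH_m^{\otimes2}}$; chaining with Fact \ref{independence of positive part} yields $\|\bar\Phi(\rho U^\dagger)|\xi_U\ra\|_{\HH_m^{\otimes2}}\ge(1-\delta)\|\rho U^\dagger\|_{\s(\HH_d)}$ for all positive $\rho$, which is exactly Fact \ref{fact4}. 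The one point I would double-check is that the purification costs only a single extra tensor factor $\HH_m$, so that the final memory dimension stays $dm^3$ as claimed in Theorem \ref{props2}, rather than a factor growing with the number of singular values of $A_U$.
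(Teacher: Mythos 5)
Your proof is correct and essentially identical to the paper's: the same singular value decomposition $A_U=\sum_i\mu_i|\alpha_i\ra\la\beta_i|$, the same pair of purifying unit vectors $|\xi_U\ra,|\chi_U\ra\in\HH_m^{\otimes 2}$ with $\Tr_{\HH_m^{(2nd)}}|\xi_U\ra\la\chi_U|=A_U$, the same partial-trace identity turning $\Tr[A_U\Phi(\rho U^\dagger)]$ into $\la\chi_U|\bar\Phi(\rho U^\dagger)|\xi_U\ra$, and the same Cauchy--Schwarz bound against the unit vector $|\chi_U\ra$. Your closing concern also resolves as you suspected: since $A_U$ acts on the $m$-dimensional space $\HH_m$ it has at most $m$ singular values, so the index register fits inside a single extra copy of $\HH_m$ and the $dm^3$ memory count of Theorem \ref{props2} is unaffected.
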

	
	In view of the last point, we have to modify the quantum channel considered at the beginning of the proof to be:
	$$
	\bar \PP (\cdot) = \bar\VV (\cdot ) \bar\VV^\dagger,\  \text{  where}  \, \, \, \, \bar \VV = \VV \otimes \id_{\HH_m} \in \UU(\HH_d^{\otimes 2}\otimes \HH_m^{\otimes3}) .
	$$
	
	Redefining everything accordingly to this, we have
	\beq\nn
	\mathcal{F}_{\bar \PP} \big[|\gamma\ra;|\psi\ra\otimes|\xi\ra\big] = \mathcal{F} \bigg[  \Tr_{\HH_d \otimes \HH_m^{\otimes 3}}\Big[(\bar\PP \otimes \id_\ZZ)  \Big( |\psi\ra\la\psi| \otimes |\xi\ra\la\xi|  \otimes \sum_{i,j} \sqrt{\lambda_i \lambda_j} | i \ra\la j| \Big)  \Big] , |\gamma\ra\la\gamma|\bigg] ,
	\eeq
	and, following the lines of \eqref{prop2eq1}, we obtain:
	\beq
	\mathcal{F}_{\bar \PP} \big[|\gamma\ra;|\psi\ra\otimes|\xi\ra\big] \ge \big\| \bar \Phi (\Tr_\ZZ |\psi\ra\la\gamma|^\rmT) |\xi\ra \big\|_{\HH_m}.
	\eeq
	
	From Fact \ref{fact4} we also know that for $\sigma^\rmT = \Tr_\ZZ\big|\psi\ra\la\gamma|,\text{ with } | \gamma\ra = (U \otimes \id_\ZZ) |\psi\ra$, there exists a vector $|\xi_U \ra \in \mathsf{ball}(\HH_m^{\otimes 2})$ such that
	$$
	\mathcal{F}_{\bar \PP} \big[ U \otimes \id_\ZZ |\psi\ra;|\psi\ra\otimes|\xi_U\ra\big] \ge \Big\| \bar \Phi \Big( \big(\Tr_\ZZ\big[{|\psi\ra\la\psi|\big]U^\dagger}\big) ^\rmT\Big) |\xi_U\ra \Big\|_{\HH_m^{\otimes 2}}\ge (1-\delta)\big\|\Tr_\ZZ\big[|\psi\ra\la\psi|\big]U^\dagger \big\|_{\s(\HH_d)},$$
	for every $|\psi\ra$ in  $\mathsf{ball} (\HH_d\otimes\ZZ),$ and  for every finite dimensional Hilbert space $\ZZ$.
	
	Then, we finally bound
	\begin{align}\nn \hspace{-5mm}
		&\frac{1}{2}\Big\|  \Tr_{\HH_d \otimes \HH_m^{\otimes 3}} \Big[  \bar\PP  \big(  \cdot \otimes |\xi_U\ra\la\xi_U| \otimes \sum_{i,j} \sqrt{\lambda_i \lambda_j} |i \ra\la j| \big) \Big]  - U (\cdot ) U^\dagger \Big\|_{\diamond} \\&\le \sup_{|\psi\ra\in \mathsf{ball}(\HH_d \otimes \ZZ)} \sqrt{1-\mathcal{F}_{\bar \PP} \big[ U \otimes \id_\ZZ |\psi\ra;|\psi\ra\otimes|\xi_U\ra\big]^2} \\ \nn&\le \sqrt{2\delta},
	\end{align}
	
	concluding the proof.
\end{proof}

\subsection{Theorem \ref{ths1}}\label{SecProofs3}

%

In this section, we prove that any $\epsUPQP$, $\PP$, not necessarily unitary,  satisfies the restrictions given by Theorem \ref{ths1}.

\begin{proof}
	First, we consider a Stinespring's dilation for $\PP$:
	$$\VV\in \UU(\HH_d\otimes\HH_{m'}),\ \ \text{such that}\ \  \PP(\cdot) = \Tr_{\HH_{anc}} \VV ( \cdot ) \VV^\dagger,$$
	where $\HH_{m'} = \HH_m \otimes \HH_{anc}$ and $\HH_{anc}$ is an ancillary Hilbert space of dimension equal to the Krauss rank of $\PP$, $rank(\PP) \le (\dim (\HH_d\otimes\HH_m))^2 \equiv (d m)^2$. Fixing the dimension of $\HH_{anc}$, $\VV$ is uniquely determined  up to unitaries on $\HH_{anc}$. 
	
	Now, we construct $\Phi_{\PP}$ as in the unitary case \eqref{correspondence1}:
	\beq\label{correspondence2}
	\begin{array}{cccl}
		\Phi_\PP: & \s(\HH_d) &\lhook\joinrel\longrightarrow& \ \BB(\HH_{m'})\\
		&  \sigma &  \mapsto  &  \Phi_\PP(\sigma) := \Tr_{\HH_d} \VV (\sigma^\rmT \otimes\id_{\HH_{m'}})  ,\end{array}
	\eeq 
	and from Theorem  \ref{props1} we obtain that:
	\beq\label{ineqs}
	\nn\|\sigma\|_{S_1(\HH_d)} \ge \|\Phi_{\PP}(\sigma)\|_{\BB(\HH_m)}  \ge (1-\varepsilon)^{1/2} \|\sigma\|_{S_1(\HH_d)}, \ \text{    for every     }\, \sigma\in S_1(\HH_d).
	\eeq
	
	Using in the first place Proposition  \ref{TypeProp2} and then  Proposition \ref{TypeProp1}, the last inequalities imply that:
	\begin{align} \label{typeconsts2}
		\mathrm{T}_2(S_1(\HH_d))  \le \dfrac{1}{(1-\varepsilon)^{\frac{1}{2}}} \mathrm{T}_2\big( \Phi_{\VV}(S_1(\HH_d)) \big)  \le \frac{1}{(1-\varepsilon)^{\frac{1}{2}}} \mathrm{T}_2\big(\BB(\HH_{m'})\big).
	\end{align}
	
	Taking into account the estimates \eqref{typeconsts} we finally arrive to:
	\begin{equation}
		d  \le \frac{C}{(1-\varepsilon)}\log (\dim \HH_{m'}) \quad  \Rightarrow \quad  \dim \HH_{m'} \ge 2^{\frac{(1-\varepsilon)}{C} d},\nn
	\end{equation}
	where $C$ can be taken equal to $4$ or maybe better.
	Recalling that  $\HH_{m'} = \HH_m \otimes \HH_{anc}$ with $\dim \HH_{anc} \le (d m)^2$ we get the stated bound:
	$$
	m	\ge	\frac{1}{d^{2/3}} 2^{\frac{(1-\varepsilon)d}{3 C}} =2^{ ^{\frac{(1-\varepsilon)d}{3 C} - \frac{2}{3} \log d}}.
	$$
\end{proof}

\begin{observation}
	We notice that the proof remains unchanged if we restrict to the family of elements $\big\lbrace diag(\epsilon_1,\ldots,\epsilon_d)$ $: \epsilon_i \in\lbrace \pm 1\rbrace  \big\rbrace \subset \s(\HH_d)$ instead of considering the action of $\Phi_\PP$ on the whole $\s(\HH_d)$. That is true since these elements are enough to estimate  $  \rmT_2(\HH_d) \ge \sqrt{d} $, just following the lines of the proof of the first part of \eqref{typeconsts}. This means that a programmable processor implementing   the family of unitaries $ \big\lbrace diag(\epsilon_1,\ldots,\epsilon_d)$ $: \epsilon_i \in\lbrace \pm 1\rbrace  \big\rbrace$ up to accuracy $\varepsilon^{-1}$ 
	also has to satisfy the bound of Theorem \ref{ths1}. Explicitly, it means that to program the $2^d$ elements in $\big\lbrace diag(\epsilon_1,\ldots,\epsilon_d)$ $: \epsilon_i \in\lbrace \pm 1\rbrace  \big\rbrace \subset \s(\HH_d)$ a memory of dimension at least $2^{\frac{(1-\varepsilon)}{3C}d - \frac{2}{3} \log d}$ is needed, while a classical memory of dimension $2^d$ is enough to store them with no error.
\end{observation}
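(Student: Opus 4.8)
The plan is to show that the entire chain of inequalities behind Theorem \ref{ths1} can be driven by the diagonal sign matrices alone, so that the hypothesis ``$\PP$ is an $\epsUPQP$'' may be weakened to ``$\PP$ implements each $U_\epsilon:=\mathrm{diag}(\epsilon_1,\ldots,\epsilon_d)$, $\epsilon\in\{\pm1\}^d$, up to diamond-distance $\varepsilon$''. First I would fix a Stinespring dilation $\VV\in\UU(\HH_d\otimes\HH_{m'})$ of $\PP$ and form the complete contraction $\Phi_\PP$ exactly as in \eqref{correspondence2}, with $\HH_{m'}=\HH_m\otimes\HH_{anc}$ and $\dim\HH_{anc}\le(dm)^2$. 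The upper bound $\|\Phi_\PP(\sigma)\|_{\BB(\HH_{m'})}\le\|\sigma\|_{\s(\HH_d)}$ holds for \emph{every} $\sigma$ purely because $\|\Phi_\PP\|_{cb}\le1$; crucially, this step uses no program states at all.

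The heart of the argument is to isolate where the program states actually enter. Recall from \eqref{typeconsts} that the estimate $\rmT_2(\s(\HH_d))\ge\sqrt d$ is witnessed by the single family $x_i=|i\ra\la i|$, $i=1,\dots,d$, whose Rademacher combinations are precisely $\sum_i\epsilon_i x_i=U_\epsilon$. Hence the $\varepsilon$-embedding property of $\Phi_\PP$ is needed only to lower bound $\|\Phi_\PP(U_\epsilon)\|_{\BB(\HH_{m'})}$ for the $2^d$ sign matrices. The key point, which I would verify carefully, is that the lower-bound step in the proof of Theorem \ref{props1}, applied to the input $\sigma=U_\epsilon$, invokes only the program state of $U_\epsilon$ itself: the singular value decomposition of $U_\epsilon^\rmT=U_\epsilon$ reads $\sum_i|i\ra(\epsilon_i\la i|)$, so the reference operator denoted $U$ in that proof is exactly $\mathrm{diag}(\epsilon)$ and the positive part is the identity. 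Consequently $\|\Phi_\PP(U_\epsilon)\|_{\BB(\HH_{m'})}\ge(1-\varepsilon)^{1/2}\|U_\epsilon\|_{\s(\HH_d)}=(1-\varepsilon)^{1/2}d$ holds as soon as $\PP$ is $\varepsilon$-close to $U_\epsilon(\cdot)U_\epsilon^\dagger$ via its program state $|\xi_{U_\epsilon}\ra$.

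With these two ingredients I would run the type-$2$ inequality directly on the images $y_i:=\Phi_\PP(x_i)\in\BB(\HH_{m'})$, using $\rmT_2(\BB(\HH_{m'}))$:
\begin{align*}
(1-\varepsilon)^{1/2}d\le\Big(\mathbb{E}\big\|\textstyle\sum_i\epsilon_i y_i\big\|_{\BB(\HH_{m'})}^2\Big)^{1/2}\le\rmT_2(\BB(\HH_{m'}))\Big(\sum_i\|y_i\|_{\BB(\HH_{m'})}^2\Big)^{1/2}\le\sqrt{C\log m'}\,\sqrt d,
\end{align*}
where the first inequality is the sign-matrix lower bound above (since $\sum_i\epsilon_i y_i=\Phi_\PP(U_\epsilon)$ for each realization of the signs), and the last uses $\|y_i\|\le\|x_i\|_{\s(\HH_d)}=1$ together with \eqref{typeconsts}. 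This mirrors the subspace route \eqref{typeconsts2} and yields $(1-\varepsilon)d\le C\log m'$, hence $m'\ge2^{(1-\varepsilon)d/C}$; accounting for the Stinespring overhead $\dim\HH_{anc}\le(dm)^2$ exactly as in the proof of Theorem \ref{ths1} recovers $m\ge2^{(1-\varepsilon)d/(3C)-(2/3)\log d}$.

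The only genuinely delicate step is the middle paragraph: one must check that no program state other than those of the sign unitaries is ever summoned, i.e.\ that replacing the full $\epsUPQP$ hypothesis by the finite family $\{U_\epsilon\}$ does not break the lower-bound step of Theorem \ref{props1}. This reduces to the elementary fact that a diagonal $\pm1$ matrix is its own polar unitary, which I expect to be the main (and only) obstacle; everything else — the upper bound, the type machinery, and the dilation bookkeeping — is identical to the proof of Theorem \ref{ths1}. The resulting statement is the claimed separation: programming just the $2^d$ operators $\{U_\epsilon\}$ forces a memory of dimension at least $2^{(1-\varepsilon)d/(3C)-(2/3)\log d}$, whereas a classical register of dimension $2^d$ stores them exactly.
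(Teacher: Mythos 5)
Your proposal is correct and takes essentially the same route as the paper: the Observation rests precisely on the fact that the estimate $\rmT_2(\s(\HH_d))\ge\sqrt d$ is witnessed by the family $x_i=|i\ra\la i|$, whose sign combinations are the unitaries $U_\epsilon$, and your check that the lower-bound step of Theorem \ref{props1} applied to $\sigma=U_\epsilon$ invokes only the program state of $U_\epsilon$ itself (its own polar unitary, with positive part the identity) is exactly the content of the paper's phrase ``the proof remains unchanged.'' Running the Rademacher type inequality directly on the images $y_i=\Phi_\PP(|i\ra\la i|)$ rather than citing the subspace and isomorphism propositions is merely the same computation written inline, and your Stinespring bookkeeping matches the paper's.
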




\begin{thebibliography}{41}%
\makeatletter
\providecommand \@ifxundefined [1]{%
 \@ifx{#1\undefined}
}%
\providecommand \@ifnum [1]{%
 \ifnum #1\expandafter \@firstoftwo
 \else \expandafter \@secondoftwo
 \fi
}%
\providecommand \@ifx [1]{%
 \ifx #1\expandafter \@firstoftwo
 \else \expandafter \@secondoftwo
 \fi
}%
\providecommand \natexlab [1]{#1}%
\providecommand \enquote  [1]{``#1''}%
\providecommand \bibnamefont  [1]{#1}%
\providecommand \bibfnamefont [1]{#1}%
\providecommand \citenamefont [1]{#1}%
\providecommand \href@noop [0]{\@secondoftwo}%
\providecommand \href [0]{\begingroup \@sanitize@url \@href}%
\providecommand \@href[1]{\@@startlink{#1}\@@href}%
\providecommand \@@href[1]{\endgroup#1\@@endlink}%
\providecommand \@sanitize@url [0]{\catcode `\\12\catcode `\$12\catcode
  `\&12\catcode `\#12\catcode `\^12\catcode `\_12\catcode `\%12\relax}%
\providecommand \@@startlink[1]{}%
\providecommand \@@endlink[0]{}%
\providecommand \url  [0]{\begingroup\@sanitize@url \@url }%
\providecommand \@url [1]{\endgroup\@href {#1}{\urlprefix }}%
\providecommand \urlprefix  [0]{URL }%
\providecommand \Eprint [0]{\href }%
\providecommand \doibase [0]{http://dx.doi.org/}%
\providecommand \selectlanguage [0]{\@gobble}%
\providecommand \bibinfo  [0]{\@secondoftwo}%
\providecommand \bibfield  [0]{\@secondoftwo}%
\providecommand \translation [1]{[#1]}%
\providecommand \BibitemOpen [0]{}%
\providecommand \bibitemStop [0]{}%
\providecommand \bibitemNoStop [0]{.\EOS\space}%
\providecommand \EOS [0]{\spacefactor3000\relax}%
\providecommand \BibitemShut  [1]{\csname bibitem#1\endcsname}%
\let\auto@bib@innerbib\@empty
\bibitem [{\citenamefont {Herbert}(1982)}]{Herbert82}%
  \BibitemOpen
  \bibfield  {author} {\bibinfo {author} {\bibfnamefont {N.}~\bibnamefont
  {Herbert}},\ }\href {\doibase 10.1007/BF00729622} {\bibfield  {journal}
  {\bibinfo  {journal} {Foundations of Physics}\ }\textbf {\bibinfo {volume}
  {12}},\ \bibinfo {pages} {1171} (\bibinfo {year} {1982})}\BibitemShut
  {NoStop}%
\bibitem [{\citenamefont {Dieks}(1982)}]{Dieks82}%
  \BibitemOpen
  \bibfield  {author} {\bibinfo {author} {\bibfnamefont {D.}~\bibnamefont
  {Dieks}},\ }\href {\doibase https://doi.org/10.1016/0375-9601(82)90084-6}
  {\bibfield  {journal} {\bibinfo  {journal} {Physics Letters A}\ }\textbf
  {\bibinfo {volume} {92}},\ \bibinfo {pages} {271 } (\bibinfo {year}
  {1982})}\BibitemShut {NoStop}%
\bibitem [{\citenamefont {Milonni}\ and\ \citenamefont
  {Hardies}(1982)}]{Milonni82}%
  \BibitemOpen
  \bibfield  {author} {\bibinfo {author} {\bibfnamefont {P.}~\bibnamefont
  {Milonni}}\ and\ \bibinfo {author} {\bibfnamefont {M.}~\bibnamefont
  {Hardies}},\ }\href {\doibase https://doi.org/10.1016/0375-9601(82)90899-4}
  {\bibfield  {journal} {\bibinfo  {journal} {Physics Letters A}\ }\textbf
  {\bibinfo {volume} {92}},\ \bibinfo {pages} {321 } (\bibinfo {year}
  {1982})}\BibitemShut {NoStop}%
\bibitem [{\citenamefont {W.~K.~Wootters}(1982)}]{WootersZurek82}%
  \BibitemOpen
  \bibfield  {author} {\bibinfo {author} {
  \bibnamefont {W.~K.~Wootters} \and\ \bibfnamefont {W.~H. Zurek}},\ }\href {\doibase doi:10.1038/299802a0}
  {\bibfield  {journal} {\bibinfo  {journal} {Nature}\ }\textbf {\bibinfo
  {volume} {299}},\ \bibinfo {pages} {802 } (\bibinfo {year}
  {1982})}\BibitemShut {NoStop}%
\bibitem [{\citenamefont {Mandel}(1983)}]{Mandel83}%
  \BibitemOpen
  \bibfield  {author} {\bibinfo {author} {\bibfnamefont {L.}~\bibnamefont
  {Mandel}},\ }\href {\doibase http://dx.doi.org/10.1038/304188a0} {\bibfield
  {journal} {\bibinfo  {journal} {Nature}\ }\textbf {\bibinfo {volume} {304}},\
  \bibinfo {pages} {188} (\bibinfo {year} {1983})}\BibitemShut {NoStop}%
\bibitem [{\citenamefont {Pati}\ and\ \citenamefont
  {Braunstein}(2000)}]{PatiBraunstein00}%
  \BibitemOpen
  \bibfield  {author} {\bibinfo {author} {\bibfnamefont {A.~K.}\ \bibnamefont
  {Pati}}\ and\ \bibinfo {author} {\bibfnamefont {S.~L.}\ \bibnamefont
  {Braunstein}},\ }\href {\doibase doi:10.1038/404130b0} {\bibfield  {journal}
  {\bibinfo  {journal} {Nature}\ }\textbf {\bibinfo {volume} {404}},\ \bibinfo
  {pages} {164 } (\bibinfo {year} {2000})}\BibitemShut {NoStop}%
\bibitem [{\citenamefont {Nielsen}\ and\ \citenamefont
  {Chuang}(1997)}]{NielsenChuang97}%
  \BibitemOpen
  \bibfield  {author} {\bibinfo {author} {\bibfnamefont {M.~A.}\ \bibnamefont
  {Nielsen}}\ and\ \bibinfo {author} {\bibfnamefont {I.~L.}\ \bibnamefont
  {Chuang}},\ }\href {\doibase 10.1103/PhysRevLett.79.321} {\bibfield
  {journal} {\bibinfo  {journal} {Phys. Rev. Lett.}\ }\textbf {\bibinfo
  {volume} {79}},\ \bibinfo {pages} {321} (\bibinfo {year} {1997})}\BibitemShut
  {NoStop}%
\bibitem [{Note1()}]{Note1}%
  \BibitemOpen
  \bibinfo {note} {Originally called Programmable Quantum Gate Array \cite
  {NielsenChuang97}.}\BibitemShut {Stop}%
\bibitem [{Note2()}]{Note2}%
  \BibitemOpen
  \bibinfo {note} {Even in this case we could program general quantum channels
  implementing a unitary first and tracing out a part of the
  output.}\BibitemShut {Stop}%
\bibitem [{\citenamefont {Du\ifmmode~\check{s}\else \v{s}\fi{}ek}\ and\
  \citenamefont {Bu\ifmmode~\check{z}\else \v{z}\fi{}ek}(2002)}]{DusekBuzek02}%
  \BibitemOpen
  \bibfield  {author} {\bibinfo {author} {\bibfnamefont {M.}~\bibnamefont
  {Du\ifmmode~\check{s}\else \v{s}\fi{}ek}}\ and\ \bibinfo {author}
  {\bibfnamefont {V.}~\bibnamefont {Bu\ifmmode~\check{z}\else \v{z}\fi{}ek}},\
  }\href {\doibase 10.1103/PhysRevA.66.022112} {\bibfield  {journal} {\bibinfo
  {journal} {Phys. Rev. A}\ }\textbf {\bibinfo {volume} {66}},\ \bibinfo
  {pages} {022112} (\bibinfo {year} {2002})}\BibitemShut {NoStop}%
\bibitem [{\citenamefont {Fiur\'a\ifmmode~\check{s}\else \v{s}\fi{}ek}\ \emph
  {et~al.}(2002)\citenamefont {Fiur\'a\ifmmode~\check{s}\else \v{s}\fi{}ek},
  \citenamefont {Du\ifmmode~\check{s}\else \v{s}\fi{}ek},\ and\ \citenamefont
  {Filip}}]{FiurasekDusekFilip02}%
  \BibitemOpen
  \bibfield  {author} {\bibinfo {author} {\bibfnamefont {J.}~\bibnamefont
  {Fiur\'a\ifmmode~\check{s}\else \v{s}\fi{}ek}}, \bibinfo {author}
  {\bibfnamefont {M.}~\bibnamefont {Du\ifmmode~\check{s}\else \v{s}\fi{}ek}}, \
  and\ \bibinfo {author} {\bibfnamefont {R.}~\bibnamefont {Filip}},\ }\href
  {\doibase 10.1103/PhysRevLett.89.190401} {\bibfield  {journal} {\bibinfo
  {journal} {Phys. Rev. Lett.}\ }\textbf {\bibinfo {volume} {89}},\ \bibinfo
  {pages} {190401} (\bibinfo {year} {2002})}\BibitemShut {NoStop}%
  \bibitem [{\citenamefont {D'Ariano}\ and\ \citenamefont
  {Perinotti}(2005)}]{DarianoPerinnoti05}%
  \BibitemOpen
  \bibfield  {author} {\bibinfo {author} {\bibfnamefont {G.~M.}\ \bibnamefont
  {D'Ariano}}\ and\ \bibinfo {author} {\bibfnamefont {P.}~\bibnamefont
  {Perinotti}},\ }\href {\doibase 10.1103/PhysRevLett.94.090401} {\bibfield
  {journal} {\bibinfo  {journal} {Phys. Rev. Lett.}\ }\textbf {\bibinfo
  {volume} {94}},\ \bibinfo {pages} {090401} (\bibinfo {year}
  {2005})}\BibitemShut {NoStop}%
\bibitem [{\citenamefont {Kim}\ \emph {et~al.}(2001)\citenamefont {Kim},
  \citenamefont {Cheong}, \citenamefont {Lee},\ and\ \citenamefont
  {Lee}}]{KimEtal01}%
  \BibitemOpen
  \bibfield  {author} {\bibinfo {author} {\bibfnamefont {J.}~\bibnamefont
  {Kim}}, \bibinfo {author} {\bibfnamefont {Y.}~\bibnamefont {Cheong}},
  \bibinfo {author} {\bibfnamefont {J.-S.}\ \bibnamefont {Lee}}, \ and\
  \bibinfo {author} {\bibfnamefont {S.}~\bibnamefont {Lee}},\ }\href {\doibase
  10.1103/PhysRevA.65.012302} {\bibfield  {journal} {\bibinfo  {journal} {Phys.
  Rev. A}\ }\textbf {\bibinfo {volume} {65}},\ \bibinfo {pages} {012302}
  (\bibinfo {year} {2001})}\BibitemShut {NoStop}%
\bibitem [{\citenamefont {Hillery}\ \emph {et~al.}(2001)\citenamefont
  {Hillery}, \citenamefont {Bužek},\ and\ \citenamefont
  {Ziman}}]{HilleryBuzekZiman01}%
  \BibitemOpen
  \bibfield  {author} {\bibinfo {author} {\bibfnamefont {M.}~\bibnamefont
  {Hillery}}, \bibinfo {author} {\bibfnamefont {V.}~\bibnamefont {Bužek}}, \
  and\ \bibinfo {author} {\bibfnamefont {M.}~\bibnamefont {Ziman}},\ }\href
  {https://onlinelibrary.wiley.com/doi/abs/10.1002/1521-3978(200110)49:10/11%3C987::AID-PROP987%3E3.0.CO;2-S}
  {\bibfield  {journal} {\bibinfo  {journal} {Fortschritte der Physik}\
  }\textbf {\bibinfo {volume} {49}},\ \bibinfo {pages} {987} (\bibinfo {year}
  {2001})}\BibitemShut {NoStop}%
\bibitem [{\citenamefont {Brazier}\ \emph {et~al.}(2005)\citenamefont
  {Brazier}, \citenamefont {Bu\ifmmode~\check{z}\else \v{z}\fi{}ek},\ and\
  \citenamefont {Knight}}]{BrazierBuzek05}%
  \BibitemOpen
  \bibfield  {author} {\bibinfo {author} {\bibfnamefont {A.}~\bibnamefont
  {Brazier}}, \bibinfo {author} {\bibfnamefont {V.}~\bibnamefont
  {Bu\ifmmode~\check{z}\else \v{z}\fi{}ek}}, \ and\ \bibinfo {author}
  {\bibfnamefont {P.~L.}\ \bibnamefont {Knight}},\ }\href {\doibase
  10.1103/PhysRevA.71.032306} {\bibfield  {journal} {\bibinfo  {journal} {Phys.
  Rev. A}\ }\textbf {\bibinfo {volume} {71}},\ \bibinfo {pages} {032306}
  (\bibinfo {year} {2005})}\BibitemShut {NoStop}%
\bibitem [{\citenamefont {Bera}\ \emph {et~al.}(2009)\citenamefont {Bera},
  \citenamefont {Fenner}, \citenamefont {Green},\ and\ \citenamefont
  {Homer}}]{BeraEtal09}%
  \BibitemOpen
  \bibfield  {author} {\bibinfo {author} {\bibfnamefont {D.}~\bibnamefont
  {Bera}}, \bibinfo {author} {\bibfnamefont {S.}~\bibnamefont {Fenner}},
  \bibinfo {author} {\bibfnamefont {F.}~\bibnamefont {Green}}, \ and\ \bibinfo
  {author} {\bibfnamefont {S.}~\bibnamefont {Homer}},\ }in\ \href@noop {}
  {\emph {\bibinfo {booktitle} {Computing and Combinatorics}}},\ \bibinfo
  {editor} {edited by\ \bibinfo {editor} {\bibfnamefont {H.~Q.}\ \bibnamefont
  {Ngo}}}\ (\bibinfo  {publisher} {Springer Berlin Heidelberg},\ \bibinfo
  {address} {Berlin, Heidelberg},\ \bibinfo {year} {2009})\ pp.\ \bibinfo
  {pages} {418--428}\BibitemShut {NoStop}%
\bibitem [{\citenamefont {Vidal}\ \emph {et~al.}(2002)\citenamefont {Vidal},
  \citenamefont {Masanes},\ and\ \citenamefont {Cirac}}]{VidalCirac02}%
  \BibitemOpen
  \bibfield  {author} {\bibinfo {author} {\bibfnamefont {G.}~\bibnamefont
  {Vidal}}, \bibinfo {author} {\bibfnamefont {L.}~\bibnamefont {Masanes}}, \
  and\ \bibinfo {author} {\bibfnamefont {J.~I.}\ \bibnamefont {Cirac}},\ }\href
  {\doibase 10.1103/PhysRevLett.88.047905} {\bibfield  {journal} {\bibinfo
  {journal} {Phys. Rev. Lett.}\ }\textbf {\bibinfo {volume} {88}},\ \bibinfo
  {pages} {047905} (\bibinfo {year} {2002})}\BibitemShut {NoStop}%
\bibitem [{\citenamefont {Ishizaka}\ and\ \citenamefont
  {Hiroshima}(2008)}]{IshizakaHiroshima08}%
  \BibitemOpen
  \bibfield  {author} {\bibinfo {author} {\bibfnamefont {S.}~\bibnamefont
  {Ishizaka}}\ and\ \bibinfo {author} {\bibfnamefont {T.}~\bibnamefont
  {Hiroshima}},\ }\href {\doibase 10.1103/PhysRevLett.101.240501} {\bibfield
  {journal} {\bibinfo  {journal} {Phys. Rev. Lett.}\ }\textbf {\bibinfo
  {volume} {101}},\ \bibinfo {pages} {240501} (\bibinfo {year}
  {2008})}\BibitemShut {NoStop}%
\bibitem [{\citenamefont {P\'erez-Garc\'{\i}a}(2006)}]{Perez06}%
  \BibitemOpen
  \bibfield  {author} {\bibinfo {author} {\bibfnamefont {D.}~\bibnamefont
  {P\'erez-Garc\'{\i}a}},\ }\href {\doibase 10.1103/PhysRevA.73.052315}
  {\bibfield  {journal} {\bibinfo  {journal} {Phys. Rev. A}\ }\textbf {\bibinfo
  {volume} {73}},\ \bibinfo {pages} {052315} (\bibinfo {year}
  {2006})}\BibitemShut {NoStop}%
\bibitem [{\citenamefont {
z}(2017)}]{Majenz17}%
  \BibitemOpen
  \bibfield  {author} {\bibinfo {author} {\bibfnamefont {C.}~\bibnamefont
  {Majenz}},\ }\emph {\bibinfo {title} {Entropy in quantum information theory -
  Communication and cryptography}},\ \href
  {http://www.math.ku.dk/noter/filer/phd17cm.pdf} {Ph.D. thesis} (\bibinfo
  {year} {2017})\BibitemShut {NoStop}%
 \bibitem [{\citenamefont {Majenz}(2018)}]{Majenz18}%
\BibitemOpen
\bibfield  {author} {\bibinfo {author} {\bibfnamefont {M.}~\bibnamefont
{Chrisstandl}}, \ \bibinfo {author} {\bibfnamefont {F.}~\bibnamefont
{Leditzky}},\ \bibinfo {author} {\bibfnamefont {C.}~\bibnamefont
{Majenz}},\ \bibinfo {author} {\bibfnamefont {G.}~\bibnamefont
{Smith}},\  \bibinfo {author} {\bibfnamefont {F.}~\bibnamefont
{Speelman}}\ and\ \bibinfo {author} {\bibfnamefont {M.}~\bibnamefont
{Walter}},\ } \href {http://arxiv.org/abs/1809.10751} {\bibfield
{journal} {\bibinfo  {journal} {arXiv:1809.10751}\ } (\bibinfo
{year} {2018})}\BibitemShut {NoStop}%
\bibitem [{\citenamefont {Beigi}\ and\ \citenamefont
  {König}(2011)}]{BeigiKonig11}%
  \BibitemOpen
  \bibfield  {author} {\bibinfo {author} {\bibfnamefont {S.}~\bibnamefont
  {Beigi}}\ and\ \bibinfo {author} {\bibfnamefont {R.}~\bibnamefont {König}},\
  }\href {http://stacks.iop.org/1367-2630/13/i=9/a=093036} {\bibfield
  {journal} {\bibinfo  {journal} {New Journal of Physics}\ }\textbf {\bibinfo
  {volume} {13}},\ \bibinfo {pages} {093036} (\bibinfo {year}
  {2011})}\BibitemShut {NoStop}%
\bibitem [{\citenamefont {Aubrun}\ and\ \citenamefont
  {Szarek}(2017)}]{AubrunSzarekBook}%
  \BibitemOpen
  \bibfield  {author} {\bibinfo {author} {\bibfnamefont {G.}~\bibnamefont
  {Aubrun}}\ and\ \bibinfo {author} {\bibfnamefont {S.~J.}\ \bibnamefont
  {Szarek}},\ }\href@noop {} {\emph {\bibinfo {title} {Alice and Bob Meet
  Banach: The Interface of Asymptotic Geometric Analysis and Quantum
  Information Theory}}}\ (\bibinfo  {publisher} {American Mathematical
  Society},\ \bibinfo {year} {2017})\BibitemShut {NoStop}%
\bibitem [{\citenamefont {Palazuelos}\ and\ \citenamefont
  {Vidick}(2016)}]{PalazuelosVidickReview}%
  \BibitemOpen
  \bibfield  {author} {\bibinfo {author} {\bibfnamefont {C.}~\bibnamefont
  {Palazuelos}}\ and\ \bibinfo {author} {\bibfnamefont {T.}~\bibnamefont
  {Vidick}},\ }\href {\doibase 10.1063/1.4938052} {\bibfield  {journal}
  {\bibinfo  {journal} {Journal of Mathematical Physics}\ }\textbf {\bibinfo
  {volume} {57}},\ \bibinfo {pages} {015220} (\bibinfo {year} {2016})},\
  \Eprint {http://arxiv.org/abs/https://doi.org/10.1063/1.4938052}
  {https://doi.org/10.1063/1.4938052} \BibitemShut {NoStop}%
   \bibitem [{Note()}]{SuppMat}%
  \BibitemOpen
  \bibinfo {note} {See Supplemental Material for further details, which
  includes Refs. \cite
  {WatrousNotes1,TomczakJaegBook1,Tomczak741,FuchsvandeGraaf991}}\BibitemShut {NoStop}%
    \bibitem [{\citenamefont {Watrous}(2018)}]{WatrousNotes1}%
  \BibitemOpen
  \bibfield  {author} {\bibinfo {author} {\bibfnamefont {J.}~\bibnamefont
  {Watrous}},\ }\href@noop {} {\emph {\bibinfo {title} {The Theory of Quantum
  Information}}},\ \bibinfo {edition} {1st}\ ed.\ (\bibinfo  {publisher}
  {Cambridge University Press},\ \bibinfo {address} {New York, NY, USA},\
  \bibinfo {year} {2018})\BibitemShut {NoStop}%
  \bibitem [{\citenamefont {Tomczak-Jaegermann}(1989)}]{TomczakJaegBook1}%
  \BibitemOpen
  \bibfield  {author} {\bibinfo {author} {\bibfnamefont {N.}~\bibnamefont
  {Tomczak-Jaegermann}},\ }\href@noop {} {\emph {\bibinfo {title} {Banach-Mazur
  Distances and Finite-Dimensional Operator Ideals}}}\ (\bibinfo  {publisher}
  {Longman Scientific \& Technical},\ \bibinfo {year} {1989})\BibitemShut
  {NoStop}%
  \bibitem [{\citenamefont {Tomczak-Jaegermann}(1974)}]{Tomczak741}%
  \BibitemOpen
  \bibfield  {author} {\bibinfo {author} {\bibfnamefont {N.}~\bibnamefont
  {Tomczak-Jaegermann}},\ }\href {http://eudml.org/doc/217886} {\bibfield
  {journal} {\bibinfo  {journal} {Studia Mathematica}\ }\textbf {\bibinfo
  {volume} {50}},\ \bibinfo {pages} {163} (\bibinfo {year} {1974})}\BibitemShut
  {NoStop}%
  \bibitem [{\citenamefont {Fuchs}\ and\ \citenamefont {van~de
  Graaf}(2006)}]{FuchsvandeGraaf991}%
  \BibitemOpen
  \bibfield  {author} {\bibinfo {author} {\bibfnamefont {C.~A.}\ \bibnamefont
  {Fuchs}}\ and\ \bibinfo {author} {\bibfnamefont {J.}~\bibnamefont {van~de
  Graaf}},\ }\href {\doibase 10.1109/18.761271} {\bibfield  {journal} {\bibinfo
  {journal} {IEEE Trans. Inf. Theor.}\ }\textbf {\bibinfo {volume} {45}},\
  \bibinfo {pages} {1216} (\bibinfo {year} {2006})}\BibitemShut {NoStop}%
\bibitem [{\citenamefont {Hillery}\ \emph {et~al.}(2002)\citenamefont
  {Hillery}, \citenamefont {Bu\ifmmode~\check{z}\else \v{z}\fi{}ek},\ and\
  \citenamefont {Ziman}}]{HilleryBuzekZiman02}%
  \BibitemOpen
  \bibfield  {author} {\bibinfo {author} {\bibfnamefont {M.}~\bibnamefont
  {Hillery}}, \bibinfo {author} {\bibfnamefont {V.}~\bibnamefont
  {Bu\ifmmode~\check{z}\else \v{z}\fi{}ek}}, \ and\ \bibinfo {author}
  {\bibfnamefont {M.}~\bibnamefont {Ziman}},\ }\href {\doibase
  10.1103/PhysRevA.65.022301} {\bibfield  {journal} {\bibinfo  {journal} {Phys.
  Rev. A}\ }\textbf {\bibinfo {volume} {65}},\ \bibinfo {pages} {022301}
  (\bibinfo {year} {2002})}\BibitemShut {NoStop}%
\bibitem [{\citenamefont {Vidal}\ and\ \citenamefont
  {Cirac}(2000)}]{VidalCirac00}%
  \BibitemOpen
  \bibfield  {author} {\bibinfo {author} {\bibfnamefont {G.}~\bibnamefont
  {Vidal}}\ and\ \bibinfo {author} {\bibfnamefont {J.~I.}\ \bibnamefont
  {Cirac}},\ }\href {https://arxiv.org/abs/quant-ph/0012067} {\bibfield
  {journal} {\bibinfo  {journal} {arXiv:quant-ph/0012067}\ } (\bibinfo {year}
  {2000})}\BibitemShut {NoStop}%
\bibitem [{\citenamefont {{Kitaev}}(1997)}]{Kitaev97}%
  \BibitemOpen
  \bibfield  {author} {\bibinfo {author} {\bibfnamefont {A.~Y.}\ \bibnamefont
  {{Kitaev}}},\ }\href {\doibase 10.1070/RM1997v052n06ABEH002155} {\bibfield
  {journal} {\bibinfo  {journal} {Russian Mathematical Surveys}\ }\textbf
  {\bibinfo {volume} {52}},\ \bibinfo {pages} {1191} (\bibinfo {year}
  {1997})}\BibitemShut {NoStop}%
\bibitem [{\citenamefont {Effros}\ and\ \citenamefont {Ruan}(2000)}]{RuanBook}%
  \BibitemOpen
  \bibfield  {author} {\bibinfo {author} {\bibfnamefont {E.~G.}\ \bibnamefont
  {Effros}}\ and\ \bibinfo {author} {\bibfnamefont {Z.-J.}\ \bibnamefont
  {Ruan}},\ }\href@noop {} {\emph {\bibinfo {title} {Operator Spaces}}}\
  (\bibinfo  {publisher} {Oxford University Press},\ \bibinfo {year}
  {2000})\BibitemShut {NoStop}%
  \bibitem [{\citenamefont {Rudin}(1991)}]{Rudin}%
  \BibitemOpen
  \bibfield  {author} {\bibinfo {author} {\bibfnamefont {W.}~\bibnamefont
  {Rudin}},\ }\href@noop {} {\emph {\bibinfo {title} {Functional Analysis}}}\
  (\bibinfo  {publisher} {McGraw-Hill},\ \bibinfo {year} {1991})\BibitemShut
  {NoStop}%
\bibitem [{\citenamefont {Brandao}\ and\ \citenamefont
  {Harrow}(2015)}]{BrandaoHarrow15}%
  \BibitemOpen
  \bibfield  {author} {\bibinfo {author} {\bibfnamefont {F.~G. S.~L.}\
  \bibnamefont {Brandao}}\ and\ \bibinfo {author} {\bibfnamefont {A.~W.}\
  \bibnamefont {Harrow}},\ }\href {\doibase arXiv:1509.05065} {\  arXiv:1509.05065, \  (\bibinfo
  {year} {2015})}\BibitemShut {NoStop}%
\bibitem [{\citenamefont {Sedl\'ak}\ \emph {et~al.}(2007)\citenamefont
  {Sedl\'ak}, \citenamefont {Ziman}, \citenamefont {P\ifmmode~\check{r}\else
  \v{r}\fi{}ibyla}, \citenamefont {Bu\ifmmode~\check{z}\else \v{z}\fi{}ek},\
  and\ \citenamefont {Hillery}}]{SedlakZiman07}%
  \BibitemOpen
  \bibfield  {author} {\bibinfo {author} {\bibfnamefont {M.}~\bibnamefont
  {Sedl\'ak}}, \bibinfo {author} {\bibfnamefont {M.}~\bibnamefont {Ziman}},
  \bibinfo {author} {\bibfnamefont {O.~c.~v.}\ \bibnamefont
  {P\ifmmode~\check{r}\else \v{r}\fi{}ibyla}}, \bibinfo {author} {\bibfnamefont
  {V.}~\bibnamefont {Bu\ifmmode~\check{z}\else \v{z}\fi{}ek}}, \ and\ \bibinfo
  {author} {\bibfnamefont {M.}~\bibnamefont {Hillery}},\ }\href {\doibase
  10.1103/PhysRevA.76.022326} {\bibfield  {journal} {\bibinfo  {journal} {Phys.
  Rev. A}\ }\textbf {\bibinfo {volume} {76}},\ \bibinfo {pages} {022326}
  (\bibinfo {year} {2007})}\BibitemShut {NoStop}%
\bibitem [{\citenamefont {Sent\'{\i}s}\ \emph {et~al.}(2010)\citenamefont
  {Sent\'{\i}s}, \citenamefont {Bagan}, \citenamefont {Calsamiglia},\ and\
  \citenamefont {Mu\~noz Tapia}}]{SentisBagan11}%
  \BibitemOpen
  \bibfield  {author} {\bibinfo {author} {\bibfnamefont {G.}~\bibnamefont
  {Sent\'{\i}s}}, \bibinfo {author} {\bibfnamefont {E.}~\bibnamefont {Bagan}},
  \bibinfo {author} {\bibfnamefont {J.}~\bibnamefont {Calsamiglia}}, \ and\
  \bibinfo {author} {\bibfnamefont {R.}~\bibnamefont {Mu\~noz Tapia}},\ }\href
  {\doibase 10.1103/PhysRevA.82.042312} {\bibfield  {journal} {\bibinfo
  {journal} {Phys. Rev. A}\ }\textbf {\bibinfo {volume} {82}},\ \bibinfo
  {pages} {042312} (\bibinfo {year} {2010})}\BibitemShut {NoStop}%
\bibitem [{\citenamefont {Zhou}\ \emph {et~al.}(2012)\citenamefont {Zhou},
  \citenamefont {Cui}, \citenamefont {Wu},\ and\ \citenamefont
  {Lon}}]{ZhouXinWu12}%
  \BibitemOpen
  \bibfield  {author} {\bibinfo {author} {\bibfnamefont {T.}~\bibnamefont
  {Zhou}}, \bibinfo {author} {\bibfnamefont {J.~X.}\ \bibnamefont {Cui}},
  \bibinfo {author} {\bibfnamefont {X.}~\bibnamefont {Wu}}, \ and\ \bibinfo
  {author} {\bibfnamefont {G.~L.}\ \bibnamefont {Lon}},\ }\href
  {http://dl.acm.org/citation.cfm?id=2481569.2481578} {\bibfield  {journal}
  {\bibinfo  {journal} {Quantum Info. Comput.}\ }\textbf {\bibinfo {volume}
  {12}},\ \bibinfo {pages} {1017} (\bibinfo {year} {2012})}\BibitemShut
  {NoStop}%
\bibitem [{\citenamefont {Zhou}(2012)}]{Zhou12}%
  \BibitemOpen
  \bibfield  {author} {\bibinfo {author} {\bibfnamefont {T.}~\bibnamefont
  {Zhou}},\ }\href {\doibase 10.1007/s11128-011-0327-x} {\bibfield  {journal}
  {\bibinfo  {journal} {Quantum Information Processing}\ }\textbf {\bibinfo
  {volume} {11}},\ \bibinfo {pages} {1669} (\bibinfo {year}
  {2012})}\BibitemShut {NoStop}%
\bibitem [{\citenamefont {Sent\'{\i}s}\ \emph {et~al.}(2013)\citenamefont
  {Sent\'{\i}s}, \citenamefont {Bagan}, \citenamefont {Calsamiglia},\ and\
  \citenamefont {Mu\~noz Tapia}}]{SentisBagan13}%
  \BibitemOpen
  \bibfield  {author} {\bibinfo {author} {\bibfnamefont {G.}~\bibnamefont
  {Sent\'{\i}s}}, \bibinfo {author} {\bibfnamefont {E.}~\bibnamefont {Bagan}},
  \bibinfo {author} {\bibfnamefont {J.}~\bibnamefont {Calsamiglia}}, \ and\
  \bibinfo {author} {\bibfnamefont {R.}~\bibnamefont {Mu\~noz Tapia}},\ }\href
  {\doibase 10.1103/PhysRevA.88.052304} {\bibfield  {journal} {\bibinfo
  {journal} {Phys. Rev. A}\ }\textbf {\bibinfo {volume} {88}},\ \bibinfo
  {pages} {052304} (\bibinfo {year} {2013})}\BibitemShut {NoStop}%
\bibitem [{\citenamefont {Ji}\ \emph {et~al.}(2008)\citenamefont {Ji},
  \citenamefont {Wang}, \citenamefont {Duan}, \citenamefont {Feng},\ and\
  \citenamefont {Ying}}]{JiWangDuanFengYing08}%
  \BibitemOpen
  \bibfield  {author} {\bibinfo {author} {\bibfnamefont {Z.}~\bibnamefont
  {Ji}}, \bibinfo {author} {\bibfnamefont {G.}~\bibnamefont {Wang}}, \bibinfo
  {author} {\bibfnamefont {R.}~\bibnamefont {Duan}}, \bibinfo {author}
  {\bibfnamefont {Y.}~\bibnamefont {Feng}}, \ and\ \bibinfo {author}
  {\bibfnamefont {M.}~\bibnamefont {Ying}},\ }\href {\doibase
  10.1109/TIT.2008.929940} {\bibfield  {journal} {\bibinfo  {journal} {IEEE
  Transactions on Information Theory}\ }\textbf {\bibinfo {volume} {54}},\
  \bibinfo {pages} {5172} (\bibinfo {year} {2008})}\BibitemShut {NoStop}%
\bibitem [{\citenamefont {Suzuki}(2016)}]{Suzuki16}%
  \BibitemOpen
  \bibfield  {author} {\bibinfo {author} {\bibfnamefont {J.}~\bibnamefont
  {Suzuki}},\ }\href {\doibase 10.1103/PhysRevA.94.042306} {\bibfield
  {journal} {\bibinfo  {journal} {Phys. Rev. A}\ }\textbf {\bibinfo {volume}
  {94}},\ \bibinfo {pages} {042306} (\bibinfo {year} {2016})}\BibitemShut
  {NoStop}%
\bibitem [{\citenamefont {Giovannetti}\ \emph {et~al.}(2013)\citenamefont
  {Giovannetti}, \citenamefont {Maccone}, \citenamefont {Morimae},\ and\
  \citenamefont {Rudolph}}]{GiovannettiMaccone13}%
  \BibitemOpen
  \bibfield  {author} {\bibinfo {author} {\bibfnamefont {V.}~\bibnamefont
  {Giovannetti}}, \bibinfo {author} {\bibfnamefont {L.}~\bibnamefont
  {Maccone}}, \bibinfo {author} {\bibfnamefont {T.}~\bibnamefont {Morimae}}, \
  and\ \bibinfo {author} {\bibfnamefont {T.~G.}\ \bibnamefont {Rudolph}},\
  }\href {\doibase 10.1103/PhysRevLett.111.230501} {\bibfield  {journal}
  {\bibinfo  {journal} {Phys. Rev. Lett.}\ }\textbf {\bibinfo {volume} {111}},\
  \bibinfo {pages} {230501} (\bibinfo {year} {2013})}\BibitemShut {NoStop}%
\bibitem [{\citenamefont {P\'erez-Delgado}\ and\ \citenamefont
  {Fitzsimons}(2015)}]{PerezFitzsimons15}%
  \BibitemOpen
  \bibfield  {author} {\bibinfo {author} {\bibfnamefont {C.~A.}\ \bibnamefont
  {P\'erez-Delgado}}\ and\ \bibinfo {author} {\bibfnamefont {J.~F.}\
  \bibnamefont {Fitzsimons}},\ }\href {\doibase 10.1103/PhysRevLett.114.220502}
  {\bibfield  {journal} {\bibinfo  {journal} {Phys. Rev. Lett.}\ }\textbf
  {\bibinfo {volume} {114}},\ \bibinfo {pages} {220502} (\bibinfo {year}
  {2015})}\BibitemShut {NoStop}%
\bibitem [{\citenamefont {Yu}\ \emph {et~al.}(2014)\citenamefont {Yu},
  \citenamefont {P\'erez-Delgado},\ and\ \citenamefont
  {Fitzsimons}}]{PerezFitzsimons14}%
  \BibitemOpen
  \bibfield  {author} {\bibinfo {author} {\bibfnamefont {L.}~\bibnamefont
  {Yu}}, \bibinfo {author} {\bibfnamefont {C.~A.}\ \bibnamefont
  {P\'erez-Delgado}}, \ and\ \bibinfo {author} {\bibfnamefont {J.~F.}\
  \bibnamefont {Fitzsimons}},\ }\href {\doibase 10.1103/PhysRevA.90.050303}
  {\bibfield  {journal} {\bibinfo  {journal} {Phys. Rev. A}\ }\textbf {\bibinfo
  {volume} {90}},\ \bibinfo {pages} {050303} (\bibinfo {year}
  {2014})}\BibitemShut {NoStop}%
\bibitem [{\citenamefont {Bisio}\ \emph {et~al.}(2010)\citenamefont {Bisio},
  \citenamefont {Chiribella}, \citenamefont {D'Ariano}, \citenamefont
  {Facchini},\ and\ \citenamefont {Perinotti}}]{BisioDarianoEtal10}%
  \BibitemOpen
  \bibfield  {author} {\bibinfo {author} {\bibfnamefont {A.}~\bibnamefont
  {Bisio}}, \bibinfo {author} {\bibfnamefont {G.}~\bibnamefont {Chiribella}},
  \bibinfo {author} {\bibfnamefont {G.~M.}\ \bibnamefont {D'Ariano}}, \bibinfo
  {author} {\bibfnamefont {S.}~\bibnamefont {Facchini}}, \ and\ \bibinfo
  {author} {\bibfnamefont {P.}~\bibnamefont {Perinotti}},\ }\href {\doibase
  10.1103/PhysRevA.81.032324} {\bibfield  {journal} {\bibinfo  {journal} {Phys.
  Rev. A}\ }\textbf {\bibinfo {volume} {81}},\ \bibinfo {pages} {032324}
  (\bibinfo {year} {2010}).}\BibitemShut {Stop}%

\end{thebibliography}

\begin{thebibliography}{7}%
\makeatletter
\providecommand \@ifxundefined [1]{%
 \@ifx{#1\undefined}
}%
\providecommand \@ifnum [1]{%
 \ifnum #1\expandafter \@firstoftwo
 \else \expandafter \@secondoftwo
 \fi
}%
\providecommand \@ifx [1]{%
 \ifx #1\expandafter \@firstoftwo
 \else \expandafter \@secondoftwo
 \fi
}%
\providecommand \natexlab [1]{#1}%
\providecommand \enquote  [1]{``#1''}%
\providecommand \bibnamefont  [1]{#1}%
\providecommand \bibfnamefont [1]{#1}%
\providecommand \citenamefont [1]{#1}%
\providecommand \href@noop [0]{\@secondoftwo}%
\providecommand \href [0]{\begingroup \@sanitize@url \@href}%
\providecommand \@href[1]{\@@startlink{#1}\@@href}%
\providecommand \@@href[1]{\endgroup#1\@@endlink}%
\providecommand \@sanitize@url [0]{\catcode `\\12\catcode `\$12\catcode
  `\&12\catcode `\#12\catcode `\^12\catcode `\_12\catcode `\%12\relax}%
\providecommand \@@startlink[1]{}%
\providecommand \@@endlink[0]{}%
\providecommand \url  [0]{\begingroup\@sanitize@url \@url }%
\providecommand \@url [1]{\endgroup\@href {#1}{\urlprefix }}%
\providecommand \urlprefix  [0]{URL }%
\providecommand \Eprint [0]{\href }%
\providecommand \doibase [0]{http://dx.doi.org/}%
\providecommand \selectlanguage [0]{\@gobble}%
\providecommand \bibinfo  [0]{\@secondoftwo}%
\providecommand \bibfield  [0]{\@secondoftwo}%
\providecommand \translation [1]{[#1]}%
\providecommand \BibitemOpen [0]{}%
\providecommand \bibitemStop [0]{}%
\providecommand \bibitemNoStop [0]{.\EOS\space}%
\providecommand \EOS [0]{\spacefactor3000\relax}%
\providecommand \BibitemShut  [1]{\csname bibitem#1\endcsname}%
\let\auto@bib@innerbib\@empty
\bibitem [{\citenamefont {{Kitaev}}(1997)}]{Kitaevv97}%
  \BibitemOpen
  \bibfield  {author} {\bibinfo {author} {\bibfnamefont {A.~Y.}\ \bibnamefont
  {{Kitaev}}},\ }\href {\doibase 10.1070/RM1997v052n06ABEH002155} {\bibfield
  {journal} {\bibinfo  {journal} {Russian Mathematical Surveys}\ }\textbf
  {\bibinfo {volume} {52}},\ \bibinfo {pages} {1191} (\bibinfo {year}
  {1997})}\BibitemShut {NoStop}%
\bibitem [{\citenamefont {Watrous}(2018)}]{WatrousNotes}%
\BibitemOpen
\bibfield  {author} {\bibinfo {author} {\bibfnamefont {J.}~\bibnamefont
{Watrous}},\ }\href@noop {} {\emph {\bibinfo {title} {The Theory of Quantum
Information}}},\ \bibinfo {edition} {1st}\ ed.\ (\bibinfo  {publisher}
{Cambridge University Press},\ \bibinfo {address} {New York, NY, USA},\
\bibinfo {year} {2018})\BibitemShut {NoStop}%
\bibitem [{\citenamefont {Tomczak-Jaegermann}(1989)}]{TomczakJaegBook}%
  \BibitemOpen
  \bibfield  {author} {\bibinfo {author} {\bibfnamefont {N.}~\bibnamefont
  {Tomczak-Jaegermann}},\ }\href@noop {} {\emph {\bibinfo {title} {Banach-Mazur
  Distances and Finite-Dimensional Operator Ideals}}}\ (\bibinfo  {publisher}
  {Longman Scientific \& Technical},\ \bibinfo {year} {1989})\BibitemShut
  {NoStop}%
\bibitem [{\citenamefont {Tomczak-Jaegermann}(1974)}]{Tomczak74}%
  \BibitemOpen
  \bibfield  {author} {\bibinfo {author} {\bibfnamefont {N.}~\bibnamefont
  {Tomczak-Jaegermann}},\ }\href {http://eudml.org/doc/217886} {\bibfield
  {journal} {\bibinfo  {journal} {Studia Mathematica}\ }\textbf {\bibinfo
  {volume} {50}},\ \bibinfo {pages} {163} (\bibinfo {year} {1974})}\BibitemShut
  {NoStop}%
\bibitem [{\citenamefont {Fuchs}\ and\ \citenamefont {van~de
  Graaf}(2006)}]{FuchsvandeGraaf99}%
  \BibitemOpen
  \bibfield  {author} {\bibinfo {author} {\bibfnamefont {C.~A.}\ \bibnamefont
  {Fuchs}}\ and\ \bibinfo {author} {\bibfnamefont {J.}~\bibnamefont {van~de
  Graaf}},\ }\href {\doibase 10.1109/18.761271} {\bibfield  {journal} {\bibinfo
   {journal} {IEEE Trans. Inf. Theor.}\ }\textbf {\bibinfo {volume} {45}},\
  \bibinfo {pages} {1216} (\bibinfo {year} {2006})}\BibitemShut {NoStop}%
\end{thebibliography}
\end{document}